\newtheorem{theorem}{Theorem}
\newtheorem{lemma}{Lemma}
\newtheorem{corollary}[theorem]{Corollary}
\theoremstyle{remark}
\newcommand{\problem}[3]%
{\begin{center}\fbox{\parbox[t]{0.9\textwidth}{\begin{tabular}{rl}%
\multicolumn{2}{l}{\sc #1}\\Instance:& \parbox[t]{0.75\textwidth}{#2}\\ %
Problem: & \parbox[t]{0.75\textwidth}{#3}\\ %
\end{tabular}}}\end{center}}
\newcommand{\egal}{{\sc CEO}}
\newcommand{\egalLong}{{\sc Connected Egalitarian Optimum}}
\newcommand{\egalD}{{\sc Discrete-CEO}}
\newcommand{\egalDLong}{{\sc Discrete Connected Egalitarian Optimum}}
\newcommand{\mcsp}{{\sc MC-Segment-Packing}}
\newcommand{\mcspLong}{{\sc Multiple-Choice Segment Packing}}
\newcommand{\util}{{\sc CUO}}
\newcommand{\utilLong}{{\sc Connected Utilitarian Optimum}}
\newcommand{\utilD}{{\sc Discrete-CUO}}
\newcommand{\utilDLong}{{\sc Discrete Connected Utilitarian Optimum}}
\newcommand{\DM}{{\sc 3DM}}
\title{Computing Socially-Efficient Cake Divisions}
\author{Yonatan Aumann, Yair Dombb and Avinatan Hassidim}
\date{}
\begin{document}

\maketitle

\begin{abstract}
We consider a setting in which a single divisible good (``cake'') needs to be divided between $n$ players, each with a possibly different valuation function over pieces of the cake. For this setting, we address the problem of finding divisions that maximize the {\em social welfare},  focusing on divisions where each player needs to get one contiguous piece of the cake.   
We show that for both the utilitarian and the egalitarian social welfare functions it is NP-hard to find the optimal division. For the utilitarian welfare, we provide a constant factor approximation algorithm, and prove that no FPTAS is possible unless P=NP. For egalitarian welfare, we prove that it is NP-hard to approximate the optimum to any factor smaller than 2. For the case where the number of players is small, we provide an FPT (fixed parameter tractable) FPTAS for both the utilitarian and the egalitarian welfare objectives. 
\end{abstract}

\section{Introduction}

Consider a town with a central conference hall, erected by the municipality for the benefit of the townspeople. Different people and organizations wish to use the hall for their events, each for a possibly different duration. Furthermore, each such event may have its preferences and constraints on the times when it can take place, e.g.~only in the evenings, on weekends, prior to some date, etc. 
%Clearly, it is impossible to divide an event into two parts taking place in different times. 
How should the municipality allocate the hall to the different events? How do we compute the allocation that maximizes the social welfare provided by this common resource?

A natural setting for analyzing the above problem is that of \emph{cake cutting}, where a single divisible good needs to be divided between several players with possibly different preferences regarding the different parts of the good, or ``cake''. 
%The goal, in our case, is to give each player a contiguous ``piece'' of the cake in a way that will maximize the social welfare. 
The cake cutting problem was first introduced in the 1940's by Steinhaus~\cite{Ste49}, where the goal was to give each of the $n$ players ``their due part'', i.e.~a piece worth at least $\frac{1}{n}$ of the entire cake by their own measure. (In the cake cutting literature, this fairness requirement is termed \emph{proportionality}.) Since then, other objectives have also been considered, with the majority of them requiring that the division be ``fair'', under some definition of fairness (e.g.~envy-freeness).

Here, we address the fundamental problem of \emph{maximizing social welfare} in cake cutting. 
%In situations like the one above, where different players may have different utility functions, it is many times useful to consider a social welfare function in order to estimate the quality of a solution for the entire society of players. 
Given a shared resource, the valuation functions of the players for this resource, and a social welfare function, the problem is to find an allocation that maximizes the welfare. Maximizing social welfare has been previously considered for dividing a set of discrete indivisible items, each of which must be given in whole to one player. Here, we consider the problem with a \emph{single, continuously divisible good}, and furthermore focus on the case where each player needs to get a \emph{single contiguous piece} of the good. The contiguity requirement is natural in many settings, e.g.~dividing time (as in the example above), spectrum, and real-estate. 

%Consider some continuous and heterogeneous resource that can be represented by the interval $[0,1]$. Such a resource is many times referred to as a ``cake''. Suppose that we want this cake to be shared among $n$ players. Each player $i$ may have her own view of how the value of the cake is spread among its different part; we will assume that this valuation can be represented by a non-atomic measure $v_i$. Thus, in any division of the cake between the players, each player gets some utility (which is the value of the piece given to her in the division). In this work, we consider the problem of finding a division that \emph{maximizes the social welfare} (where the social welfare is either utilitarian or egalitarian).

\paragraph{Results.} We show that the problems of maximizing utilitarian and egalitarian welfare are both NP-hard in the strong sense. For egalitarian welfare, we further show that it is hard to approximate the optimum to any factor smaller than $2$.% ; for utilitarian welfare, the strong NP-hardness result implies also that there is no FPTAS unless P = NP.

For utilitarian welfare, we provide a constant-factor approximation algorithm (note that the strong NP-hardness result implies that no FPTAS exists for the problem). Specifically, our algorithm finds a division with utilitarian welfare within $8+o(1)$ of the optimum, in polynomial time. We also show that approximating both the utilitarian and egalitarian welfare is fixed-parameter-tractable with regards to the parameter $n$ (the number of players).
%For egalitarian welfare, the problem seems to be more complex, and although non-trivial linear-factor approximation algorithms exist, finding algorithms that achieve a better approximation remains a challenging open problem. %When the number of players is small, however, both the utilitarian and the egalitarian optimum can be approximated to any factor $>1$.

Finally, we consider the case where the contiguity requirement is dropped, i.e.~each player may get a \emph{collection} of intervals. % rather than a single piece. 
For this setting, we show that the situation varies greatly depending on the model of input. When the valuations are given explicitly to the algorithm, and are piecewise constant, then the  problem can be solved in polynomial time. However, if the algorithm has only oracle access to the valuations, then it is impossible to do any better than an $n$-factor approximation, even if the valuations themselves are piecewise uniform. 

%Due to space constraints, most of the proofs are deferred to the appendix.

\paragraph{Related Work.}
The problem of maximizing egalitarian welfare when allocating a set of indivisible goods has been extensively considered in the last 15 years \cite{Woe98,AAWY98,BS06,CCK09}. The currently known best algorithms are a polynomial-time algorithm achieving an approximation factor of $O(\sqrt{n}\log^3{n})$~\cite{AS07}, and an algorithm obtaining $\tilde{O}(n^\epsilon)$ approximation in time $n^{O(1/\epsilon)}$, for any $\epsilon = \Omega(\frac{\log{\log{n}}}{\log{n}})$~\cite{CCK09}. Hardness of approximation for this problem, however, is proven only for a factor of $2$ or less~\cite{BD05}. %In addition, a more recent result~\cite{CCK09} shows that for any $\epsilon = \Omega(\frac{\log{\log{n}}}{\log{n}})$ it is possible to obtain an $\tilde{O}(n^\epsilon)$ approximation in time $n^{O(1/\epsilon)}$. 
Better approximation guarantees are known for more restricted settings, e.g.~when valuations are restricted to having only one possible non-zero value for each item~\cite{BS06,Fei08}. %, or when the number of players having non-zero valuation for each item is bounded~\cite{BCG09}. 
Envy minimization in this setting has also been considered in~\cite{LMMS04}, which showed hardness results as well as an FPTAS for the case of players with identical preferences. Unlike this body of work, which considers a {\em non-ordered set} of indivisible items, here we consider a single divisible item, and furthermore require that each player obtain a single contiguous piece of this good. 

Cake cutting problems were first introduced in the 1940's~\cite{Ste49}, and were studied in many variants since then. Various algorithms were proposed for the problem, including a number of ``moving knife'' algorithms, which perform an infinite number of valuations by continuously moving a knife over the cake (for some examples, see~\cite{Str80,EP84} and~\cite{BT95}).
%Different algorithms---both discrete and ``moving-knife'' algorithms, with the latter performing an infinite number of valuations---were suggested for the different variants, with examples including~\cite{Str80,EP84} and~\cite{BT95} which also surveys older algorithms. 
In addition to the algorithmic results, there has also been work on existence theorems~\cite{DS61,Str80}, lower bounds for the complexity of such algorithms (\cite{SW03,Str08,Pro09}, to mention just a few), and a number of books on the subject, e.g.~\cite{BT96,RW98}.

The issue of social welfare in cake cutting was first considered in Caragiannis et al.~\cite{CKKK09} which aimed to quantify the degradation in social welfare that may be caused by different fairness requirements; the same question was studied for connected pieces in~\cite{AD10}. Guo and Conitzer~\cite{GC10}, and Han et al.~\cite{HSTZ11} study the utilitarian welfare achievable by truthful mechanisms for dividing a set of divisible goods, a setting very similar to a cake with piecewise-constant valuations and non-connected pieces. Cohler et al.~\cite{CLPP11} study utilitarian welfare maximization under the envy-freeness requirement (with non-connected pieces). Bei et al.~\cite{BCHTY12} consider a similar question, but with connected pieces, and with proportionality replacing envy-freeness. Also related is the work of Zivan~\cite{Ziv11} which suggests a way for increasing utilitarian welfare using trust.

%how much welfare needs to be sacrificed in order to meet different fairness requirements. A follow-up work by a subset of the authors here~\cite{AD10} considered this question in the case of divisions with connected (contiguous) pieces. Very recently, Cohler et al.~\cite{CLPP11} aims to find the envy-free division achieving the highest possible utilitarian welfare (among all envy-free divisions), assuming that the valuation functions of the players are piecewise-constant. Another recent work by Zivan~\cite{Ziv11} analyzes the use of trust in increasing the social welfare in a division between two players. Also Bei et al.~\cite{BCHTY12}.

\section{Model and Definitions}\label{sec:prelim}

\paragraph{Valuation Functions.} 
In our model, the cake is represented by the interval $[0,1]$. Each player $i\in[n]$ (where $[n] = \{1,\ldots,n\}$) has a non-atomic (additive) measure $v_i(\cdot)$, mapping each measurable subset of $[0,1]$ to its value according to player $i$. For most of this work, we are only interested in a value of \emph{intervals} in $[0,1]$, and thus simply write $v_i(a,b)$ for the value of the interval between $a$ and $b$. (Note that since $v_i$ is non-atomic, single points have zero value, and we need not worry about the boundary points $a$ and $b$ themselves.)

%In our model, the cake is represented by the interval $[0,1]$, with each player having her own view of how much each part of the cake is worth. This is formalized by having each player $i\in[n]$ (where $[n] =\{1,\ldots,n\}$) have a non-atomic additive measure $v_i(\cdot)$ which maps each measurable subset of $[0,1]$ to its value according the player $i$. For most of this work we are only interested in the value of an \emph{interval} in $[0,1]$, and simply write $v_i(a,b)$ to indicate the value of the interval between $a$ and $b$ for player $i$. (Note that since the measures are non-atomic, a single point has zero value to all players, and we need not worry whether the boundary points $a$ and $b$ are included in the interval.)

%Since we assume that these measures are non-atomic, i.e.~no single point is worth anything to a player, we need not distinct between open and closed intervals. We thus use $v_i(a,b)$ to denote the utility player $i$ obtains from receiving the part of the cake between point $a$ and point $b$, without needing to worry about the boundary points $a$ and $b$.

We also assume, as common in the cake-cutting literature, that the valuations are \emph{normalized}, i.e.~that $v_i(0,1) = 1$ for every player $i$. However, %this is not necessary, and 
our results hold (with small modifications to the algorithms or complexity) for arbitrary valuations as well.

\paragraph{Social Welfare Functions.} 
We consider two prominent social welfare functions, whose aim is to measure how good each division is for the \emph{whole society}. Let $x$ be a division (to be formally defined shortly); we write $u_i(x)$ to express the value player $i$ obtains from the piece she receives in $x$. The \emph{utilitarian welfare} is defined as the sum of utilities, and we denote $u(x) = \sum_{i\in[n]}{u_i(x)}$. The \emph{egalitarian welfare} is defined as the utility of the worst-off player, and we denote $eg(x) = \min_{i\in[n]}{u_i(x)}$.

%As is common in economics, we wish to aggregate the utilities of the players in a division into a single value, measuring how good a division is for the \emph{whole society} of players. We consider two prominent social welfare functions: utilitarian and egalitarian. Let $x$ be a division (to be formally defined shortly); we write $u_i(x)$ to express the value player $i$ obtains from the piece she receives in $x$. The utilitarian welfare is defined as the sum of utilities, and we denote $u(x) = \sum_{i\in[n]}{u_i(x)}$. The egalitarian welfare %on the other hand, 
%is defined as the utility of the worst-off player, and we denote $eg(x) = \min_{i\in[n]}{u_i(x)}$.

\paragraph{Connected Divisions.} 
In this work, we focus on divisions in which every player gets a (disjoint) \emph{single interval} of the cake. Formally, a connected division of the cake $[0,1]$ between $n$ players can be defined as a vector $x = (x_1,\ldots,x_{n-1},\pi)\in [0,1]^{n-1}\times S_n$ (where $S_n$ is the set of all the permutations of $[n]$), having $x_1\leq x_2\leq \dots \leq x_{n-1}$. This is interpreted as making $n-1$ cuts in positions $x_1,\ldots,x_{n-1}$, and allocating the $n$ resulting intervals to the players in the order determined by the permutation $\pi$. Note that the space $X$ of all such divisions is compact; in addition, both utilitarian and egalitarian welfare functions are continuous in $X$ (as the players' valuation functions are all non-atomic). Therefore, for each of these welfare functions there exists a division that maximizes the welfare.

Our main problem is thus the following: \emph{given the players' valuations, what is the (connected) division that maximizes welfare?} Since the two welfare functions considered here obtain maxima in the divisions space, the problem is indeed well-defined. %However, in order to analyze these problems, it is useful to consider their decision versions. 
%We therefore define the following two decision problems, in which one needs to determine whether a given level of welfare is obtainable for a given instance.
For the analysis of these problems, it is useful to consider their decision versions, defined as follows.

\problem{\utilLong~(\util)}{A set $\{v_i\}_{i=1}^n$ of non-atomic measures on $[0,1]$, and a bound $B$.}{Does there exist a connected division $x$ having $u(x) \geq B$?}%$\sum_{i\in[n]}{u_i(x)} \geq B$?}

\problem{\egalLong~(\egal)}{A set $\{v_i\}_{i=1}^n$ of non-atomic measures on $[0,1]$, and a bound $B$.}{Does there exist a connected division $x$ having $eg(x) \geq B$?}%$u_i(x) \geq B$ for all $i$?}

\paragraph{Complexity and Input Models.}
In order to analyze the complexity of our problems, we must first define how the input is represented. In most of the cake cutting literature, the mechanism is not explicitly given the players' valuation functions; instead, it can \emph{query} the players on their valuations (see e.g.~\cite{EP84,RW98,Str08}). Typically, two types of queries are allowed. In the first, a player $i$ is given points $0\leq a\leq b \leq 1$ and is required to return the value $v_i(a,b)$. In the second type of query, a player $i$ is given a point $a\in[0,1]$ and a value $x$ and is required to return a point $b$ such that $v_i(a,b) = x$; we denote this by $v_i^{-1}(a,x)$.\footnote{Note that using only one type of query it is possible to give approximate answers (in polynomial time) to queries of the other type using binary search.}

In contrast, some more recent works (e.g.~\cite{CLPP10,CLPP11,BCHTY12}) consider a model in which the players give complete descriptions of their valuations %their entire valuation functions 
to the mechanism. In this case, it is usually assumed that the functions have some simple structure, so they can be represented succinctly. Specifically, for each player $i$, let $\nu_i:[0,1] \rightarrow [0,\infty)$ be a \emph{value density function}, such that 
\begin{equation*}
v_i(X) = \int_X{\nu_i(x)}dx \;
\end{equation*}
for every measurable subset $X\in[0,1]$.
Following~\cite{CLPP10}, we say that a valuation function $v_i(\cdot)$ is \emph{piecewise-constant} if its value density function $\nu_i(\cdot)$ is a step function, i.e.~if $[0,1]$ can be partitioned into a \emph{finite} number of intervals such that $\nu_i$ is constant on each interval. If, in addition, there is some constant $c_i$ such that %the value of $\nu_i$ on each such interval is either $0$ or $c$, 
$\nu_i(\cdot)$ can only attain the values $0$ or $c_i$, we say that $v_i(\cdot)$ is \emph{piecewise-uniform}.\footnote{Note that in this case the constant $c_i$ is uniquely determined by the total fraction of $[0,1]$ in which $\nu_i(x)\neq 0$, since we require that the valuation of the entire cake should be $1$.} Any piecewise-constant valuation function $v_i(\cdot)$ can be therefore represented by a finite set of subintervals of $[0,1]$ together with the value $\nu_i$ attains in each interval.

\iffalse
Two types of such simple valuation functions that we also consider are piecewise-constant valuations and piecewise-uniform valuations.
%The works above consider piecewise-linear, piecewise-constant and piecewise-uniform valuations. 
A piecewise-constant valuation function is simply a step function on $[0,1]$; a piecewise-uniform valuation function is an even simpler step functions in which all the steps have height
have either height $0$ or some constant $c$. (The constant $c$ is uniquely determined by the total fraction of $[0,1]$ having positive value, since we require that the valuation of the entire cake should be $1$.) In both cases it is assumed that the number of ``steps'' in the valuation function of each player is finite.
\fi

Our hardness results show that both of the decision problems above are computationally hard, even when the valuation functions are of the simplest type---piecewise-uniform---and are given explicitly to the mechanism.
%computationally hard to solve both of the decision problems above, even when the valuation functions of the players are ``simple'', and are given explicitly to the mechanism. By ``simple'', we mean that the players' valuation functions are piecewise-; i.e.~each player's valuation is defined by a step function on $[0,1]$.%can be described as a (finite) partition of the cake into intervals together with a value for each interval, such that getting any $\alpha$-fraction of an interval (in size) yields the player an $\alpha$-fraction of the interval's value.
In contrast, our positive algorithmic results hold also for the more general oracle model. %In particular, this means that our algorithms can handle valuations that are not piecewise-constant. 
The complexity of our algorithms in this case depends on the number of players $n$ and additionally on a precision parameter $\epsilon$.

\paragraph{The Discrete Variants.}
A convenient preprocessing step in our algorithms will be reducing our problems into ones that are purely combinatorial. More precisely, we consider discrete analogues of the problems, where one is additionally given a set of points in $[0,1]$, and is only allowed to make cuts at points from this set (and not anywhere in $[0,1]$). An alternative interpretation is to consider, instead of a continuous cake, a \emph{sequence of indivisible items}; a connected division in this setting gives each player a \emph{consecutive subsequence} of these items. The discrete variants of our problems are defined as follows:

\problem{\utilDLong~(\utilD)}
{A sequence $A = (a_1,\ldots,a_m)$ of items, a set $\{v_i\}_{i=1}^n$ of valuation functions of the form $v_i:A\rightarrow \mathbb{R}^+$, and a bound $B$.}
{Does there exist a connected division $x$ having $u(x) \geq B$?}%$\sum_{i\in[n]}{u_i(x)} \geq B$?}

\problem{\egalDLong~(\egalD)}
{A sequence $A = (a_1,\ldots,a_m)$ of items, a set $\{v_i\}_{i=1}^n$ of valuation functions of the form $v_i:A\rightarrow \mathbb{R}^+$, and a bound $B$.}
{Does there exist a connected division $x$ having $eg(x)\geq B$?} %$u_i(x) \geq B$ for all $i$?}

Our hardness results apply to these ``cleaner'' problems as well.
%While these problems are based a ``cleaner'' combinatorial structure, our hardness results hold for them as well. 
We note that if we drop the contiguity requirement, allowing players to get any disjoint subsets of $A$, maximizing utilitarian welfare becomes trivial (give each item to the player who values it the most). In contrast, maximizing egalitarian welfare (in the discrete setting with non-connected pieces) is known to be a hard problem~\cite{BD05} and has been studied extensively %in the last decade
(e.g.~\cite{AS07,CCK09}).

\section{Approximation Algorithms}

In this section we present algorithms that return a division that is guaranteed not to be too far from the social optimum. Throughout this section we assume that the algorithms operate in the (more-general) oracle model. We note that if the valuation functions are given explicitly, and are simple enough (in particular, if they are piecewise-constant), the answer to each oracle query can be computed in time polynomial in the input size.

\subsection{The Discretization Procedure}

As we have previously mentioned, it is often useful to reduce the continuous cake into a sequence of discrete items. We now show that this can indeed be done in a time-efficient manner, and without too much harm to the maximum obtainable welfare. Algorithm~\ref{alg:discretize} below receives a cake instance and a parameter $\epsilon$, and produces a set of cut positions that partition the cake into a set of items. At each point, let $a$ be the position of the last (rightmost) cut. Algorithm~\ref{alg:discretize} asks each player $i$ for the leftmost point $b_i$ such that the value $v_i(a,b_i) = \epsilon$;
%Algorithm 1 asks all the players for the leftmost point such that the interval extending rightward between the last cut and this point is worth exactly $\epsilon$ to them. 
it then makes a cut at the leftmost of these points, and repeats the process.

\begin{algorithm}\label{alg:discretize}
\DontPrintSemicolon
\KwData{Oracle access to $v_i(\cdot)$ for each player $i\in[n]$, and $\epsilon > 0$.}
%\KwResult{$G'=(X,V)$ with $V\subseteq U$ such that $G'^{tc}$ is an interval order.}
\Begin{
$a \longleftarrow 0$ \;
$C \longleftarrow \{ 0 \}$ \;

\While{$\exists i: v_i(a,1) > \epsilon$}{
	\For{$i\in[n]$}{
		$b_i \longleftarrow v_i^{-1}(a,\epsilon)$ \;
	}
	$b \longleftarrow \min_i{b_i}$ \;
	$C \longleftarrow C \cup \{b \}$ \;
	$a \longleftarrow b$ \;
}
$C \longleftarrow C \cup \{ 1\}$ \;
\Return{$C$}
}
\caption{Discretization Procedure}
\end{algorithm}

Note that the set of cuts $C$ produced by the algorithm induces a sequence of items. Specifically, let $0=c_0< c_1 < \cdots  < c_m=1$ be the cut points in $C$; then, for each $1\leq j\leq m$ create an item $a_j$ with value $v_i(a_j)=v_i(c_{j-1},c_j)$ for player $i\in[n]$.

The following lemma establishes that the set $C$ can be computed efficiently, and that we do not lose much utilitarian welfare by restricting our cuts positions to $C$. We note that a similar claim also holds for egalitarian welfare.

\begin{lemma}\label{lem:discrete}
Let $\{v_i(\cdot)\}_{i\in[n]}$ be a cake instance with $n$ players, and consider some precision parameter $\epsilon$. Then:
\begin{enumerate}
\item Algorithm~\ref{alg:discretize} terminates on this instance in time $O(n^2/\epsilon)$.

\item Let $x$ be a division of the original cake; then there exists a division $y$ making cuts only at points in the set $C$ returned by Algorithm~\ref{alg:discretize}, and having $u(y)\geq u(x) - (n-1)\epsilon$.
\end{enumerate}
\end{lemma}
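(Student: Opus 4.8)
The plan is to first isolate the single structural property of the set $C$ that drives both parts of the lemma: every item $a_j = (c_{j-1},c_j)$ induced by $C$ has value at most $\epsilon$ for \emph{every} player. This follows directly from the construction, since the cut $c_j$ is placed at $b = \min_i b_i$ where $b_i = v_i^{-1}(c_{j-1},\epsilon)$; thus for each player $i$ we have $c_j \le b_i$ and hence $v_i(c_{j-1},c_j) \le v_i(c_{j-1},b_i) = \epsilon$. For the final item $(c_{m-1},1)$ the same bound holds because the while-loop terminated, meaning $v_i(c_{m-1},1)\le\epsilon$ for all $i$. Moreover, the \emph{minimizing} player $i^*$ of each non-final item satisfies $v_{i^*}(c_{j-1},c_j)=\epsilon$ exactly, a fact I will use for the running-time bound.

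For Part 1, I would bound the number of loop iterations. Each iteration places one cut, and for the corresponding interval the minimizing player has value exactly $\epsilon$ on it. Fixing a player $i$, the intervals on which $i$ is the minimizer are pairwise disjoint and each has $v_i$-value $\epsilon$; since they sum to at most $v_i(0,1)=1$, player $i$ can be the minimizer at most $1/\epsilon$ times. Summing over the $n$ players bounds the number of iterations by $n/\epsilon$. As each iteration issues $n$ inverse queries and takes a minimum over $n$ values, its cost is $O(n)$, giving total time $O(n^2/\epsilon)$.

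For Part 2, I would construct $y$ from $x$ by rounding each cut down to the nearest point of $C$: writing $x_0=0 < x_1\le\cdots\le x_{n-1} < x_n=1$ for the cuts of $x$ and $\pi$ for its permutation, set $c^{(k)} = \max\{c\in C : c\le x_k\}$ (so $c^{(0)}=0$, $c^{(n)}=1$), and let $y$ assign player $\pi(k)$ the interval $(c^{(k-1)},c^{(k)})$. Rounding down preserves the ordering, so $y$ is a valid division cutting only at points of $C$. Using additivity of each measure together with $c^{(k-1)}\le x_{k-1}$ and $c^{(k)}\le x_k$, the per-piece loss is
\[
v_{\pi(k)}(x_{k-1},x_k) - v_{\pi(k)}(c^{(k-1)},c^{(k)}) = v_{\pi(k)}(c^{(k)},x_k) - v_{\pi(k)}(c^{(k-1)},x_{k-1}),
\]
and summing over $k$ gives $u(x)-u(y) = \sum_{k=1}^n v_{\pi(k)}(c^{(k)},x_k) - \sum_{k=1}^n v_{\pi(k)}(c^{(k-1)},x_{k-1})$.

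The main obstacle is the final bookkeeping, since the pieces belong to \emph{different} players and the two sums therefore do not telescope. Instead, I would discard the second sum (each of its terms is non-negative), note that its boundary term at $k=1$ vanishes because $c^{(0)}=x_0=0$, and note that in the first sum the term at $k=n$ vanishes because $c^{(n)}=x_n=1$. Each surviving term $v_{\pi(k)}(c^{(k)},x_k)$ with $1\le k\le n-1$ is the value of a sub-interval of the single item of $C$ immediately to the left of $x_k$, hence at most $\epsilon$ by the structural property. This yields $u(x)-u(y)\le (n-1)\epsilon$, i.e.\ $u(y)\ge u(x)-(n-1)\epsilon$, as claimed.
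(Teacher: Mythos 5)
Your proof is correct and takes essentially the same approach as the paper's: Part 1 is the same count (the paper tracks the aggregate potential $\sum_{i}v_i(a,1)$ dropping by $\epsilon$ per cut, you charge $\epsilon$ to the minimizing player — both give $n/\epsilon$ iterations), and Part 2 snaps each cut of $x$ to an adjacent point of $C$ so that each piece changes by at most one item, worth at most $\epsilon$ to its owner. The only cosmetic difference is that you round cuts down (so the rightmost piece is the one that cannot lose) whereas the paper rounds them up (so the leftmost piece cannot lose).
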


\begin{proof}%[Proof of Lemma~\ref{lem:discrete}]
For item (1), note that in each iteration of the while-loop, the value $\sum_{i\in[n]}{v_i(a,1)}$ decreases by at least $\epsilon$; since at the beginning of the algorithm $\sum_{i\in[n]}{v_i(a,1)} = n$, there can be at most $O(n/\epsilon)$ iterations. The claim follows by noting that in each iteration we make $2n$ queries to the oracles.

For item (2), let $x$ be a division of the cake. We define the division $y$ by setting $y_j$ to be the leftmost point $c\in C$ having $c\geq x_j$, for each cut $x_j$ in the division $x$ (the order of the allocation in $y$ is similar to that of $x$). Let $k$ be the player getting the leftmost piece in both divisions; clearly $u_k(y)\geq u_k(x)$. Since for any two consecutive cuts $c',c''\in C$ and any player $i$, $v_i(c',c'')\leq \epsilon$, we also have that for all other players $i\neq k$ it holds that $u_i(y)\geq u_i(x) - \epsilon$, and (2) follows immediately.
\end{proof}

\subsection{Approximating the Utilitarian Welfare}

We now present an approximation algorithm for the problem of maximizing utilitarian welfare; the approximation ratio achieved by our algorithm is $8\big( 1+(n-1)\epsilon \big)$, where $\epsilon$ is a precision parameter, and the running time of the algorithm is polynomial in $n$ and in $1/\epsilon$. As a first step, the algorithm uses the discretization procedure (Algorithm~\ref{alg:discretize}) and obtains a set $A$ of $m$ discrete items. %(we denote the number of items by $m$). 
We now describe how to approximate the optimal utilitarian welfare for this new instance. The algorithm returns a set $\{(s_i,t_i)\}_{i\in[n]}$, where $s_i$ is the beginning index of $i$'s piece, and $t_i$ is its end index. We also use the notation $(s,t)$ to refer to the consecutive sequence of items $\{s,s+1,\ldots,t-1,t\}$; hence, e.g.~$v_i(s,t) = \sum_{j=s}^t{v_i(j)}$.

\begin{algorithm}\label{alg:cuo}
\DontPrintSemicolon
\KwData{For each player $i\in[n]$ a vector of valuations $v_i:[m]\rightarrow \mathbb{R}^+$.}
%\KwResult{$G'=(X,V)$ with $V\subseteq U$ such that $G'^{tc}$ is an interval order.}
\Begin{
$\forall{i\in[n]}: s_i\longleftarrow 0\ ,\ t_i\longleftarrow 0$\;
%$\forall{i\in[n],j\in[m]}: x^i_j \longleftarrow 0$ \;

\For{$t = 1,\ldots,m$}{
	\While{
	$\max_{k\in[n], s\leq t}{\bigg\lbrace v_k(s,t) - 	2\Big( v_k(s_k,t_k) + V_{-k}(s,t) \Big) \bigg\rbrace} \geq 0$ \;} {
		$k',s' \longleftarrow$ arguments maximizing the expression \;
		$s_{k'} \longleftarrow s'\ ,\ t_{k'} \longleftarrow t$ \; 
		$(s_i,t_i) \longleftarrow (0,0)$ for all $i$ with $s_i\geq s'$ \;
		$t_i \longleftarrow s'-1$ for $i$ with $s_i < s' \leq t_i$ \; 
		%$x^{k'}_j \longleftarrow 1$ for all $s'\leq j\leq t$ \;
	}

%\If{item $t$ still not given to any player}{
	%$k' \longleftarrow$ the player owning item $t-1$ \;
	%$t_{k'} \longleftarrow t$ \;
%}
}
\Return{$\big\lbrace (s_i,t_i) \big\rbrace_{i\in[n]}$}
}
\caption{Discrete Utilitarian Welfare Approximation}
\end{algorithm}

Our algorithm for the discretized instance works iteratively, where in the $t$-th iteration it finds a good division for the first $t$ items. We begin with the trivial null allocation of 0 items. Assuming that we have a good allocation for the first $t-1$ items, and for all $s\leq t$ and $k\in[n]$, we consider the \emph{cost} of giving items $s$ through $t$ to player $k$. This cost is comprised of two components. The first component is the value of a piece $(s_k,t_k)$ that player $k$ may currently own, and has to give up in order to get the new piece $(s,t)$. The second component is the sum of values that the other players to which the items $s$ through $t$ are assigned obtain from these items. We denote this second component by $V_{-k}(s,t)$. We only give the segment $(s,t)$ to player $k$ if her total value $v_k(s,t)$ for this segment is at least twice the cost of giving her this segment. We continue trying to find a player $k'$ and a segment $(s',t)$ ending at item $t$ whose value exceeds twice the cost, and make changes until there are no such player and segment, at which point we move on to the next item $t+1$.% If we were not able to find \emph{any} such player and segment in this iteration, the item $t$ is still not owned by any player; in that case we add it to the segment of the player who received item $t-1$.

Observe that in the algorithm, each interval $(s,t)$ can be given to player $i$ at most once; this immediately implies that the running time of the algorithm is polynomial in the number of players $n$ and the number of items $m$.
%Clearly, the algorithm runs in time polynomial in the number of players $n$ and the number of items $m$. 
For analyzing the approximation ratio of the algorithm, we use indicator variables $x_i^j$, for $i\in[n]$ and $j\in[m]$. At each step in the algorithm, we will have $x_i^j=1$ if and only if player $i$ owned the item $j$ at some point until now. %The following lemma is central to our analysis. %It is immediate to observe that at any point in the algorithm we have $\sum_{i\in[n]}{v_i(s_i,t_i)} \leq \sum_{i\in[n]}{\sum_{j\in[m]}{x^i_j\cdot v_i(j)}}$. The following lemma shows that these two quantities are even more closely-related.
%The following lemma is proved by induction.

\begin{lemma}\label{lem:cuo}
At any iteration $t$ of the above algorithm, we have
\begin{equation*}
\sum_{i\in[n]}{v_i(s_i,t_i)} \leq \sum_{i\in[n]}{\sum_{j\in[m]}{x_i^j\cdot v_i(j)}} \leq 2\cdot\sum_{i\in[n]}{v_i(s_i,t_i)} 
\end{equation*}
(where the values are as in the end of the $t$-th iteration).
\end{lemma}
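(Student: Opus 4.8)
The plan is to prove the two inequalities separately. Write $W = \sum_{i\in[n]} v_i(s_i,t_i)$ for the total value of the \emph{current} pieces, and $P = \sum_{i\in[n]}\sum_{j\in[m]} x_i^j\, v_i(j)$ for the total value of all items \emph{ever} owned. The left-hand inequality $W \le P$ is immediate: whenever player $i$ currently holds $(s_i,t_i)$, every item $j\in\{s_i,\dots,t_i\}$ satisfies $x_i^j=1$, so $v_i(s_i,t_i)\le \sum_{j} x_i^j\, v_i(j)$; summing over $i$ gives the claim.

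For the nontrivial inequality $P\le 2W$, I would argue by induction over the elementary steps of the algorithm, that is, over the individual executions of the inner while-loop body in which some player $k'$ is (re)assigned the segment $(s',t)$. Both $W$ and $P$ start at $0$, so the inequality holds before any step. I would then show that in each such step the increments satisfy $\Delta P \le 2\,\Delta W$; summing these per-step inequalities from the start and invoking the base case yields $P\le 2W$ at the end of every iteration $t$ (the quantities are unchanged when $t$ is incremented without a move, so it suffices to control them step by step).

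The heart of the argument is this per-step estimate, which is where the factor-$2$ threshold in the while-loop condition enters. For $\Delta P$: the only indicators that change are $x_{k'}^j$ for $j\in\{s',\dots,t\}$, and each can only flip from $0$ to $1$, so $\Delta P \le v_{k'}(s',t)$. For $\Delta W$ I would track the three groups affected by the reassignment: player $k'$ swaps its old contribution $v_{k'}(s_{k'},t_{k'})$ for the new $v_{k'}(s',t)$; every player whose piece lies entirely inside $(s',t)$ (those with $s_i\ge s'$) is emptied; and the at most one player whose piece straddles $s'$ (with $s_i<s'\le t_i$) is truncated to end at $s'-1$. The main bookkeeping obstacle is verifying that the value removed from these \emph{other} players is exactly $V_{-k'}(s',t)$: this uses that at iteration $t$ all pieces are disjoint and contained in $\{1,\dots,t\}$, so the emptied and truncated portions are precisely the overlaps of the others' pieces with $(s',t)$. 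Granting this, $\Delta W = v_{k'}(s',t) - v_{k'}(s_{k'},t_{k'}) - V_{-k'}(s',t)$, independently of any overlap between $k'$'s old and new pieces.

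Finally I would combine these with the while-loop's entry condition $v_{k'}(s',t) \ge 2\big(v_{k'}(s_{k'},t_{k'}) + V_{-k'}(s',t)\big)$, which gives $v_{k'}(s_{k'},t_{k'})+V_{-k'}(s',t)\le \tfrac12 v_{k'}(s',t)$ and hence $\Delta W \ge \tfrac12 v_{k'}(s',t) \ge \tfrac12\,\Delta P$, i.e.~$\Delta P \le 2\,\Delta W$ as required (and, incidentally, $\Delta W\ge 0$, so $W$ never decreases). I expect the only genuinely delicate point to be the $\Delta W$ accounting above, in particular checking that $k'$'s own reassignment is handled consistently with the reset and truncation rules (the resets being understood to apply to $i\ne k'$) and that no overlap is double-counted; the algebra linking $\Delta P$ and $\Delta W$ through the threshold is then routine.
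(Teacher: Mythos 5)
Your proposal is correct and follows essentially the same route as the paper: the left inequality is immediate from the definition of the indicators, and the right one is proved by tracking each execution of the while-loop body, bounding $\Delta P\le v_{k'}(s',t)$ and using the loop's entry condition to get $\Delta W\ge\tfrac12 v_{k'}(s',t)$, hence $\Delta P\le 2\Delta W$. Your extra bookkeeping verifying that the value removed from the other players is exactly $V_{-k'}(s',t)$ is a point the paper glosses over, but it is the same argument.
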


\begin{proof}%[Proof of Lemma~\ref{lem:cuo}]
The first inequality trivially holds, and we prove the second by induction on $t$. The second inequality clearly holds at the beginning of the step $t=1$; we show that if it holds at the beginning of some step $t$, then it must still hold at the end of this step. At the beginning of the $t$-th step, item $t$ is unallocated. If the while loop was not executed even once in this iteration, none of the expressions $\sum_{i\in[n]}{\sum_{j\in[m]}{x^i_j\cdot v_i(j)}}$ and $\sum_{i\in[n]}{v_i(s_i,t_i)}$ has changed, and the claim still holds. Otherwise, consider some iteration of the while loop. In such an iteration, the increase in $\sum_{i\in[n]}{\sum_{j\in[m]}{x_i^j\cdot v_i(j)}}$ is upper-bounded by $v_{k'}(s',t)$. The expression $\sum_{i\in[n]}{v_i(s_i,t_i)}$ also gains $v_{k'}(s',t)$, but in addition loses $v_k(s_k,t_k) + V_{-k}(s,t)$; however, from the while loop condition we have that $v_{k'}(s',t) - \big( v_k(s_k,t_k) + V_{-k}(s,t) \big) \geq \frac{1}{2}\cdot v_{k'}(s',t)$. Therefore, the increase to the right-hand side of the inequality is at least as large as that of the left-hand side, and the inequality is maintained. Since this holds for every iteration of the while loop, this still holds at the end of step $t$, as required.
\end{proof}

\begin{theorem}
Algorithm 2 returns an $8$-approximation of the discrete utilitarian optimum. 
\end{theorem}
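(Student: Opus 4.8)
The plan is to bound the value of the division returned by the algorithm, $\mathrm{ALG} = \sum_{i\in[n]} v_i(s_i,t_i)$ (with the final values of $s_i,t_i$), against the value $\mathrm{OPT} = \sum_{i\in[n]} v_i(\sigma_i,\tau_i)$ of an optimal connected division, where $(\sigma_i,\tau_i)$ denotes the (disjoint) block of items that an optimal solution assigns to player $i$. The two levers are the termination condition of the while-loop and Lemma~\ref{lem:cuo}. Introduce the ``ever-owned'' potential $\Phi = \sum_{i\in[n]}\sum_{j\in[m]} x_i^j\, v_i(j)$ evaluated at the end of the run. Lemma~\ref{lem:cuo}, applied at $t=m$, already gives $\Phi \le 2\,\mathrm{ALG}$, so it suffices to prove $\mathrm{OPT} < 4\,\Phi$; combining the two yields the factor $8$.

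The central step is to apply the termination condition at the right moment for each optimal player. I would fix player $i$ and consider the state of the variables at the \emph{end of iteration} $t=\tau_i$. At that point the while-loop has exited, so the maximization in its guard is negative for every choice of $k$ and $s\le \tau_i$; instantiating $k=i$ and $s=\sigma_i$ (legal since $\sigma_i\le\tau_i$) gives
\[
v_i(\sigma_i,\tau_i) < 2\,v_i(s_i,t_i) + 2\,V_{-i}(\sigma_i,\tau_i),
\]
where $s_i,t_i$ and $V_{-i}(\sigma_i,\tau_i)$ are read off at the end of iteration $\tau_i$. Summing over all $i$ gives $\mathrm{OPT} < 2\sum_i v_i(s_i,t_i) + 2\sum_i V_{-i}(\sigma_i,\tau_i)$, and the task reduces to bounding each of these two sums by $\Phi$.

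For the first sum, note that every item player $i$ owns at the end of iteration $\tau_i$ has had $x_i^j$ set to $1$, and since the indicators are monotone (once set they stay set), this contribution is captured by the \emph{final} $\Phi$; hence $\sum_i v_i(s_i,t_i)\big|_{\tau_i}\le \Phi$. For the second sum, each item $j$ counted in $V_{-i}(\sigma_i,\tau_i)$ is owned at that moment by some player $p\neq i$, so $x_p^j=1$ both then and at the end, and therefore its value is dominated by $\sum_p x_p^j v_p(j)$; summing over $i$ and using that the optimal blocks $(\sigma_i,\tau_i)$ are \emph{disjoint} (so each item $j$ is charged at most once) gives $\sum_i V_{-i}(\sigma_i,\tau_i)\le \Phi$ as well. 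Together these yield $\mathrm{OPT} < 2\Phi+2\Phi = 4\Phi \le 8\,\mathrm{ALG}$.

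The main obstacle I anticipate is precisely the time-bookkeeping in the previous paragraph: the termination condition for different optimal players must be invoked at \emph{different} iterations $\tau_i$, yet the final bound is expressed through the single end-of-run quantity $\Phi$. Reconciling these requires the observation that the allocation state and the quantity $V_{-i}$ at any intermediate iteration are dominated by the terminal accumulated value, which hinges on the monotonicity of the $x_i^j$ variables (they record ``owned at some point'', never reverting to $0$ even when a player's interval is truncated or reset). Everything else—the disjointness argument for the $V_{-i}$ term and the final multiplication by the Lemma~\ref{lem:cuo} factor—is routine once this point is set up correctly.
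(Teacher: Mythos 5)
Your proposal is correct and follows essentially the same route as the paper's proof: invoke the while-loop exit condition at the end of iteration $t_i^*$ with $k=i$ and $s=s_i^*$, bound the player's current holding and the $V_{-i}$ term by the final ever-owned potential $\sum_{i}\sum_{j}x_i^j v_i(j)$ (using monotonicity of the indicators and disjointness of the optimal segments), and finish with Lemma~\ref{lem:cuo} to get $OPT \le 4\Phi \le 8\,\mathrm{ALG}$. The only cosmetic difference is that you name the potential $\Phi$ explicitly and make the monotonicity of the $x_i^j$ an explicit step, which the paper leaves implicit.
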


\begin{proof}
Fix an instance, let $\big\lbrace (s_i^A, t_i^A) \big\rbrace_{i\in[n]}$ be the algorithm's final output on this instance, and let $\big\lbrace (s_i^*, t_i^*) \big\rbrace_{i\in[n]}$ be the optimal division, with total utility $OPT = \sum_{i\in[n]}{v_i(s_i^*,t_i^*)}$.

For every player $k$, consider the iteration $t_k^*$, in which the rightmost item given to $k$ in the optimal division was first considered. Let $(s_k',t_k')$ be the segment given to player $k$ at the end of this iteration. When iteration $t_k^*$ ends, it has to be that $v_k(s_k^*,t_k^*) \leq 2\big( v_k(s_k',t_k') + V_{-k}(s_k^*,t_k^*) \big)$ (where $V_{-k}(s_k^*,t_k^*)$ is with respect to the division set by the algorithm at this point). Note that $v_k(s_k',t_k') = \sum_{j=s_k'}^{t_k'}{x^k_j\cdot v_k(j)}$ and that $V_{-k}(s_k^*,t_k^*) \leq \sum_{j=s_k^*}^{t_k^*}{\sum_{i\neq k}{x^i_j\cdot v_i(j)}}$. Combining all this, we get
\begin{align*}
OPT  = \sum_{k\in[n]}{v_k(s_k^*,t_k^*)} & \leq \sum_{k\in[n]}{2\cdot \Big( \sum_{j=s_k'}^{t_k'}{x_j^k\cdot v_k(j)} + \sum_{j=s_k^*}^{t_k^*}{\sum_{i\neq k}{x^i_j\cdot v_i(j)}} \Big)} \\
	& = 2\cdot \Big( \sum_{k\in[n]}{\sum_{j=s_k'}^{t_k'}{x_j^k\cdot v_k(j)}} + \sum_{k\in[n]}{\sum_{j=s_k^*}^{t_k^*}{\sum_{i\neq k}{x^i_j\cdot v_i(j)}}} \Big) \\
	& \leq 2\cdot \Big( \sum_{k\in[n]}{\sum_{j\in[m]}{x_j^k\cdot v_k(j)}} + \sum_{k\in[n]}{\sum_{j\in[m]}{x_j^k\cdot v_k(j)}} \Big) \\
	& = 4\cdot \sum_{k\in[n]}{\sum_{j\in[m]}{x_j^k\cdot v_k(j)}} \leq 8\cdot \sum_{i\in[n]}{v_i(s_i^A,t_i^A)} \;.
\end{align*}
The second inequality holds since for every $k\neq k'$ the segments $(s_k^*,t_k^*)$ and $(s_{k'}^*,t_{k'}^*)$ are disjoint, as $\big\lbrace (s_i^*, t_i^*) \big\rbrace_{i\in[n]}$ is a division. The last inequality follows from Lemma~\ref{lem:cuo}.
\end{proof}

Combining Algorithm~\ref{alg:discretize} and Algorithm~\ref{alg:cuo} we get:

%For every $\epsilon > 0$, the problem of maximizing the utilitarian welfare of a continuous cake can be approximated to a factor of $8\big( 1+(n-1)\epsilon \big)$ in time polynomial in $n$ and $1/\epsilon$.

\begin{corollary}
For every $\epsilon > 0$, it is possible to find a division achieving utilitarian welfare within $8\big( 1+(n-1)\epsilon \big)$ of the optimum in time polynomial in $n$ and $1/\epsilon$.
\end{corollary}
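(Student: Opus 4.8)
The plan is to cascade the two procedures and then fuse their guarantees into a single multiplicative bound. First I would run the discretization procedure (Algorithm~\ref{alg:discretize}) with the given precision $\epsilon$ on the input instance; by Lemma~\ref{lem:discrete}(1) this halts in time $O(n^2/\epsilon)$ and outputs a cut set $C$ inducing a sequence $A$ of $m = O(n/\epsilon)$ items. I would then feed $A$ to Algorithm~\ref{alg:cuo} and return the division $z$ it produces, re-interpreting its block allocation of $A$ as a connected division of $[0,1]$ that cuts only at points of $C$.

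The correctness amounts to a short chain of inequalities. Let $OPT$ be the continuous utilitarian optimum, attained by some division $x^{*}$, and let $OPT_d$ be the best welfare over divisions whose cuts all lie in $C$; since such divisions are exactly the connected divisions of the item sequence $A$, $OPT_d$ is precisely the discrete optimum that Algorithm~\ref{alg:cuo} approximates. Applying Lemma~\ref{lem:discrete}(2) to $x^{*}$ produces a $C$-respecting division of welfare at least $OPT - (n-1)\epsilon$, so $OPT_d \geq OPT - (n-1)\epsilon$. By the preceding theorem, $u(z) \geq OPT_d/8$, and combining the two yields the additive guarantee $u(z) \geq \big(OPT - (n-1)\epsilon\big)/8$.

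It remains to turn this additive guarantee into the claimed multiplicative factor, and this is the step I expect to require the most care. The lever is the normalization $v_i(0,1)=1$, which forces $OPT \geq 1$ (allocate the entire cake to a single player). Writing $u(z) \geq \tfrac{OPT}{8}\big(1 - \tfrac{(n-1)\epsilon}{OPT}\big) \geq \tfrac{OPT}{8}\big(1-(n-1)\epsilon\big)$, I obtain $OPT/u(z) \leq 8/\big(1-(n-1)\epsilon\big)$, which for small $\epsilon$ coincides with the stated factor $8(1+(n-1)\epsilon)$ up to the negligible gap between $1/(1-t)$ and $1+t$. The running-time claim is then immediate, since Algorithm~\ref{alg:cuo} runs in time polynomial in $n$ and $m$, and $m = O(n/\epsilon)$. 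The only genuine subtlety is the bookkeeping that merges the additive discretization loss of Lemma~\ref{lem:discrete} with the multiplicative loss of Algorithm~\ref{alg:cuo}: everything hinges on having a constant lower bound on $OPT$ against which the error term $(n-1)\epsilon$ can be charged.
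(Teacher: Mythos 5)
Your proposal is correct and is exactly the argument the paper intends (the paper gives no explicit proof, just ``combining Algorithm~1 and Algorithm~2''): discretize, charge the additive loss $(n-1)\epsilon$ from Lemma~\ref{lem:discrete}(2) against the lower bound $OPT\geq 1$ coming from normalization, and compose with the $8$-approximation of the discrete optimum. The only cosmetic gap you already flag --- ending with factor $8/\big(1-(n-1)\epsilon\big)$ rather than $8\big(1+(n-1)\epsilon\big)$ --- is closed by running the discretization with $\epsilon' = \epsilon/\big(1+(n-1)\epsilon\big)$, which changes the running time only by a constant factor.
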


\subsection{Fixed-Parameter Tractable Approximations}

Suppose that we have a relatively small number of players $n$, but that the social efficiency of the division is of much importance. We show that divisions that are within a factor of $1+\epsilon$ of the social optimum (for both utilitarian and egalitarian welfare) can be computed in time exponential in the number of players, but polynomial in  $\frac{1}{\epsilon}$.\footnote{Recall that we assume the oracle model; if the valuation functions are given explicitly, we also have polynomial dependence on the size of the input.} Using the terminology of the theory of Parametrized Complexity~\cite{DF99} we say that these approximations are \emph{fixed-parameter tractable} with respect to the number of players $n$. %; note that such approximations are efficient when $n$ is relatively small compared to the size of the input (if the valuations are given explicitly) and/or to $\frac{1}{\epsilon}$. 
%In what follows, we briefly describe how such approximations can be obtained.
%We prove:
%Both of these algorithms are based on dynamic programming.

\begin{theorem}\label{thm:fpt-util}
%For every $\epsilon > 0$, the problem of maximizing the utilitarian welfare of a continuous cake can be approximated to a factor of $1+\epsilon$ in time $2^n\cdot poly(n,\frac{1}{\epsilon})$.
For every $\epsilon > 0$, it is possible to find a division achieving utilitarian welfare within $1+\epsilon$ of the optimum in time  $2^n\cdot poly(n,\frac{1}{\epsilon})$.
\end{theorem}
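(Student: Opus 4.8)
The plan is to reduce to the discrete problem via Algorithm~\ref{alg:discretize}, and then solve the discretized instance \emph{exactly} by a dynamic program whose state records the \emph{set} of players already placed; this subset bookkeeping is the source of the $2^n$ factor. First I would run Algorithm~\ref{alg:discretize} with precision $\epsilon' = \epsilon/(n-1)$. By Lemma~\ref{lem:discrete}(1) this takes time $O(n^2/\epsilon') = O(n^3/\epsilon)$ and produces a set $C$ of $m = O(n/\epsilon') = O(n^2/\epsilon)$ cut points, inducing a discrete instance on $m$ items. By Lemma~\ref{lem:discrete}(2), restricting cuts to $C$ costs at most $(n-1)\epsilon' = \epsilon$ in utilitarian welfare, so the optimal \emph{discrete} welfare is at least $OPT - \epsilon$, where $OPT$ is the continuous optimum.

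Next I would solve the discrete instance exactly. The key structural observation is that in a connected division each player occupies a single contiguous block of items, so sweeping the items from left to right places the players in some order, and the only information needed to continue the sweep is \emph{which} players have already been placed (the order in which they were placed is irrelevant by additivity of welfare). I would therefore define a dynamic program with state $(S,j)$, where $S \subseteq [n]$ is the set of players already assigned and $j \in \{0,\ldots,m\}$ is the number of items already covered, and let $W(S,j)$ be the maximum total utility over all ways of partitioning items $1,\ldots,j$ into $|S|$ consecutive blocks assigned bijectively to the players of $S$. The recurrence is
\[
W(S \cup \{k\}, j') = \max_{k \notin S,\; j \le j'} \Big( W(S,j) + v_k(j+1,j') \Big),
\]
with $W(\emptyset,0)=0$ and answer $W([n],m)$; empty blocks (matching zero-length pieces) are permitted by taking $j'=j$. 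Since welfare is additive, $W(S,j)$ depends only on $S$ and $j$, so the table is well-defined. There are $2^n(m+1)$ states and $O(nm)$ transitions per state, and precomputing prefix sums makes each $v_k(s,t)$ an $O(1)$ lookup, so the total running time is $2^n \cdot O(nm^2) = 2^n \cdot poly(n,1/\epsilon)$.

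Finally I would convert the additive guarantee into the claimed multiplicative one using the trivial bound $OPT \ge 1$: the division giving one player the whole cake already has welfare $1$, since valuations are normalized. Hence the computed division has welfare at least $OPT - \epsilon \ge (1-\epsilon)OPT \ge OPT/(1+2\epsilon)$ for $\epsilon \le 1/2$, and rescaling $\epsilon$ by a constant yields the desired $1+\epsilon$ factor while keeping the running time $2^n \cdot poly(n,1/\epsilon)$. The step I expect to require the most care is precisely this additive-to-multiplicative conversion: it hinges on the uniform lower bound $OPT \ge 1$, which is exactly what makes an \emph{additive} loss of $(n-1)\epsilon'$ affordable once $\epsilon'$ is scaled down by $n-1$. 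The dynamic program and its $2^n$ running time are then routine, once one notices that contiguity allows the subset of already-placed players to serve as the entire state.
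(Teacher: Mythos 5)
Your proposal is correct and follows essentially the same route as the paper: discretize via Algorithm~\ref{alg:discretize}, then run a $2^n\cdot poly$ dynamic program over subsets of players exploiting contiguity. The only differences are cosmetic --- the paper's DP state additionally records which player holds the current item (giving $O(n)$ transitions per entry instead of your $O(nm)$ per state, both within $poly$), and you spell out the additive-to-multiplicative conversion via $OPT\geq 1$, which the paper leaves implicit.
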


\begin{proof}%[Proof of Theorem~\ref{thm:fpt-util}]
We begin by transforming the cake into a set $A = \{a_1,\ldots,a_m\}$ of discrete items, by running the discretization procedure (Algorithm~\ref{alg:discretize}). We then use dynamic programming to find the utilitarian optimal division for the discretized instance. We create a table $U$ having a row of length $m$ for each pair $(S,k)$, where $S\subseteq [n]$ is a non-empty subset of players, and $k\in S$. Denote by $u_j^{(S,k)}$ the value in the $j$-th column of row $(S,k)$ in $U$. The value $u_j^{(S,k)}$ will be the largest total utility obtainable by dividing the first $j$ items between the players of $S$, where the last item $j$ is given to player $k$. Therefore, the maximum utilitarian welfare in the discretized instance is $\max_{S,k}{\big\lbrace u_m^{(S,k)} \big\rbrace }$.

It therefore remains to show how the values in $U$ can be computed. We begin with the first column. For every $k\in[n]$, we have $u_1^{(\{k\},k)} = v_k(a_1)$; for any $S\neq \{k\}$ the pair $(S,k)$ is invalid and we set $u_1^{(S,k)} = -\infty$. Now, suppose that we have filled in the values in the first $j$ columns, and consider $u_{j+1}^{(S,k)}$. If $k\notin S$, again $(S,k)$ is invalid, and we set $u_{j+1}^{(S,k)} = -\infty$. If $k\in S$, then the maximum value is obtained by either extending $k$'s piece in the best division of the items $1$ through $j$ in which $k$ is gets item $j$, or by taking the best division of items $1$ through $j$ between the players $S\setminus \{k\}$, and giving item $j+1$ to $k$. Formally, 
\begin{equation*}
u_{j+1}^{(S,k)} = \max\bigg\lbrace u_j^{(S,k)} + v_k(a_{j+1}) \ ,\  \max_{i\in S\setminus \{k\}}\Big\lbrace u_j^{(S\setminus\{k\},i)}\Big\rbrace + v_k(a_{j+1}) \bigg\rbrace \;.
\end{equation*}
We get that the value of each table entry can be computed in time $O(n)$, and therefore filling the entire table $U$ and finding the utilitarian optimum can be done in time $2^n\cdot poly\big( n, \frac{1}{\epsilon} \big)$, as stated. Once we have the approximated welfare, the actual division can be easily computed by backtracking.
\end{proof}

%\paragraph{Utilitarian Welfare} We give a $(1+\epsilon)$ multiplicative approximation algorithm with running time $2^n\cdot poly(n,\frac{1}{\epsilon})$. We begin again transforming 

\begin{theorem}\label{thm:fpt-egal}
%For every $\epsilon > 0$, the problem of maximizing the egalitarian welfare of a continuous cake can be approximated to a factor of $1+\epsilon$ in time $2^n\cdot n\cdot \log_2\left(\frac{n}{\epsilon}\right)$.
For every $\epsilon > 0$, it is possible to find a division achieving egalitarian welfare within $1+\epsilon$ of the optimum in time  $2^n\cdot n\cdot \log_2\left(\frac{n}{\epsilon}\right)$.
\end{theorem}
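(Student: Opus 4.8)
The plan is to sidestep the discretization step used for utilitarian welfare (which is why no factor of $m$ or $\mathrm{poly}(1/\epsilon)$ appears in the bound) and instead \emph{binary-search on the target egalitarian value}, answering an \emph{exact} feasibility question at each threshold by a subset dynamic program that queries the valuation oracles directly. Fix a threshold $B$. Since a connected division orders the players along the cake, I would compute, for every subset $S\subseteq[n]$, the quantity $f(S)\in[0,1]\cup\{+\infty\}$ defined as the leftmost point such that the prefix $(0,f(S))$ can be split into connected pieces, one per player of $S$, with every player in $S$ receiving value at least $B$. Then a division with $eg(x)\geq B$ exists if and only if $f([n])\leq 1$, so feasibility reduces to evaluating this single table entry.

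The recurrence is the heart of the argument. Set $f(\emptyset)=0$, and for nonempty $S$ observe that in any feasible packing of a prefix the rightmost piece belongs to some $k\in S$, while the others pack a shorter prefix ending where $k$'s piece begins. To minimize the consumed length one should make the $S\setminus\{k\}$ prefix as short as possible (end it at $f(S\setminus\{k\})$) and hand $k$ \emph{exactly} value $B$, so that
\[
f(S)=\min_{k\in S} v_k^{-1}\big(f(S\setminus\{k\}),\,B\big),
\]
with the convention that a term is $+\infty$ when $v_k(f(S\setminus\{k\}),1)<B$. Correctness is an induction on $|S|$, and the step that needs care --- the main obstacle --- is the exchange argument behind this greedy choice. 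For the lower bound, given any feasible prefix packing by $S$ with rightmost player $k$, deleting $k$'s piece leaves a feasible packing by $S\setminus\{k\}$ ending at some $q\ge f(S\setminus\{k\})$, so the right endpoint is at least $v_k^{-1}(q,B)\ge v_k^{-1}(f(S\setminus\{k\}),B)$; here I use that $v_k^{-1}(a,B)$ is nondecreasing in its starting point $a$, which holds because each $v_i$ is a nonnegative (non-atomic) measure. The matching upper bound is obtained by composing an optimal $S\setminus\{k\}$ packing with $k$'s piece of value exactly $B$. Each $f(S)$ thus costs $O(n)$ oracle queries (one $v_k^{-1}$ query per candidate last player $k$), and filling the whole table costs $2^n\cdot n$ queries per threshold; a final backtracking over the table recovers the witnessing division.

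It remains to bound the number of thresholds tested. The predicate ``$f([n])\leq 1$'' is monotone in $B$, and the egalitarian optimum $OPT$ is exactly the largest feasible $B$, so I would binary-search $B$ over $[0,1]$. To convert an additive guarantee into the multiplicative $(1+\epsilon)$ bound I use the fact that a connected \emph{proportional} division always exists (for instance the single moving-knife procedure of Dubins and Spanier~\cite{DS61} produces $n$ connected pieces, each worth at least $1/n$ to its recipient), hence $OPT\ge 1/n$. Running the search to additive precision $\delta=\Theta(\epsilon/n)$ then returns a feasible $\hat B$ with $\hat B\ge OPT-\delta\ge OPT/(1+\epsilon)$, and needs $\log_2(1/\delta)=\log_2(n/\epsilon)+O(1)$ iterations. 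Multiplying by the $2^n\cdot n$ cost of each feasibility test yields the claimed running time $2^n\cdot n\cdot \log_2\!\left(\frac{n}{\epsilon}\right)$, and the backtracked division at the final feasible threshold is the desired $(1+\epsilon)$-approximate solution.
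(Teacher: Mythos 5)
Your proposal is correct and follows essentially the same route as the paper: the same subset dynamic program computing the leftmost prefix endpoint $f(S)$ (the paper's $C(S)$) via the recurrence $\min_{k\in S}v_k^{-1}(f(S\setminus\{k\}),B)$, the same feasibility test $f([n])\le 1$, and the same binary search sharpened to a multiplicative guarantee using the lower bound $OPT\ge 1/n$ from the existence of a proportional division. The only difference is that you spell out the exchange/monotonicity argument justifying the recurrence, which the paper states without proof.
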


\begin{proof}%[Proof of Theorem~\ref{thm:fpt-egal}]
We first show an \emph{additive} $\epsilon$-approximation to the egalitarian welfare %that can be computed in time $(2^n-1)\cdot n\cdot\log_2{\left(\frac{1}{\epsilon}\right)}$ (in the oracle model) 
that works as follows. Let $B\in [0,1]$ be our goal egalitarian welfare. We construct a vector $C$ of cut positions, with one entry for every non-empty subset $S\subseteq [n]$ of players; the value $C(S)$ will be the leftmost point $a\in[0,1]$ such that it is possible to divide the interval $[0,a]$ between the players of $S$ and have each $i\in S$ have utility $B$. Clearly, the egalitarian welfare $B$ is feasible if and only if $C\big([n]\big)\leq 1$.

We now show how to compute the values of $C$. First, for each $i\in[n]$, we can set $C(\{i\}) = v_i^{-1}(0,B)$. Suppose that for $k<n$ we have filled all the entries $C(S)$ for $|S|\leq k$. Let $S$ be such that $|S| = k+1$; then $C(S) = \min_{i\in S}{v_i^{-1}\big(C(S\setminus\{i\}),B\big)}$ (if there is no such value, we can mark $C(S) = \infty$). Therefore, each entry can be computed in time $O(n)$; by using binary search, we can find a value $B$ within an additive distance of $\epsilon$ of the egalitarian optimum in $\log_2{\left(\frac{1}{\epsilon}\right)}$ iterations. 

Now, note that the optimal egalitarian welfare is at least $\frac{1}{n}$, since a proportional division (which has at least this welfare) is guaranteed to exist. Therefore, by continuing the binary search for $\log_2{n}$ more iterations, we can find a value of $B$ which is within a $(1-\epsilon)$-\emph{multiplicative} factor of the optimum, and the stated complexity follows. As in the previous algorithm, the actual division can be easily computed by backtracking.
\end{proof}

\section{Hardness}

We show that all of the four problems defined in Section~\ref{sec:prelim} are NP-complete in the strong sense. Note that membership in NP is straightforward, as a division achieving the required welfare can serve as a witness for that instance; we thus concentrate on proving hardness. 

\iffalse
Before going into the hardness proofs themselves, we show that if a set of piecewise-constan valuations is simple enough, it is equivalent some (perhaps different) set of piecewise-uniform valuations. Specifically, we show that if each player has at most one desired subset of $[0,1]$ that intersects the desired intervals of other players, and on this set, $i$'s value density function is constant, we can represent the valuations using piecewise-uniform functions.

\begin{lemma}\label{lem:uniform-const}
Let $\{v_i\}_{i\in[n]}$ be a set of piecewise-constant valuations. Assume that for every player $i$ there is some set $X_i$ on which $\nu_i$ is constant and such that for every point $a\in [0,1] \setminus X_i$ and every player $j$ we have $\nu_i(a) = 0$ or $\nu_j(a) = 0$. Then there exists a set $\{v'_i\}_{i\in[n]}$ of piecewise-uniform valuations such that for every division $x$ of the cake with valuations $\{v_i\}_{i\in[n]}$ there exists a division $y$ of the cake with valuations $\{v'_i\}_{i\in[n]}$ having $u_i(x) = u_i(y)$ for all $i\in[n]$.
\end{lemma}

\begin{proof}
This can be easily achieved by rescaling the size of the intervals desired only by one player. Specifically, for each player $i$, let $c$ be the value $\nu_i$ attains anywhere in $X_i$, and let $I$ be some other interval on which $\nu_i$ is constant and positive; then we set $\nu_i(a) = c$ for all $a\in I$ and then expand or shrink $I$ so as to assure that the total value $i$ had for $I$ remains the same.
\end{proof}
\fi

\subsection{Hardness of Maximizing Egalitarian Welfare}

We prove that \egal~is strongly NP-complete and hard to approximate to a factor of $2-\epsilon$ for any $\epsilon>0$. We show this using a reduction from the classic problem of \DM~\cite{GJ79}. In this problem, one is given three sets $X,Y,Z$ of cardinality $n$ each, as well as a set $E \subseteq X\times Y\times Z$, and needs to decide whether there exists a subset $E'\subseteq E$ of cardinality $n$ that covers $X, Y$ and $Z$.

Our reduction borrows its main ideas from the proofs of Bez{\'a}kov{\'a} and Dani~\cite{BD05} for non-connected divisions in the discrete setting, which itself uses ideas from Lenstra, Shmoys and Tardos~\cite{LST90}. %However, adjusting the proof to work with connected divisions, and especially to work in the continuous setting, requires a somewhat more intricate construction and proof.
However, the adjustment to the continuous setting with connected divisions is somewhat intricate and needs to be done carefully.

\begin{theorem}\label{thm:ceo-hard}
\egal~and \egalD~are NP-complete in the strong sense. Furthermore, for every $\epsilon > 0$ there is no $(2-\epsilon)$ approximation for either of the problems, unless P=NP. 

This holds even if the valuation functions of the players are piecewise-uniform, and are given explicitly to the algorithm.
\end{theorem}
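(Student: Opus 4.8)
The plan is to reduce from \DM. Given an instance with $|X|=|Y|=|Z|=n$ and triples $E\subseteq X\times Y\times Z$, I will build a cake together with a target bound $B$ so that a perfect matching exists if and only if the egalitarian optimum is at least $B$, while the absence of a perfect matching forces the optimum down to at most $B/2$. This single gap yields everything at once: it gives strong NP-completeness, and it also gives the $(2-\epsilon)$-inapproximability, since a $(2-\epsilon)$-approximation could tell $\mathrm{opt}\ge B$ apart from $\mathrm{opt}\le B/2$ and hence decide \DM. Because \DM\ is NP-complete in the strong sense and my gadget will use only small integer lengths and value levels, strong NP-completeness follows. The same instance, with the admissible cut-set taken to be the endpoints of the gadget sub-intervals, simultaneously handles \egalD, and all the valuations I construct will be piecewise-uniform and written down explicitly.

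The underlying combinatorial gadget is the factor-$2$ reduction of Bez{\'a}kov{\'a} and Dani, which I will realize geometrically on the cake. I introduce an \emph{element player} $P_e$ for each $e\in X\cup Y\cup Z$ and a \emph{dummy player} $D_t$ for each triple $t\in E$. The cake is laid out as a sequence of $|E|$ contiguous \emph{gadgets}, one per triple, followed by a single \emph{reserve} block. The gadget for $t=(x,y,z)$ is split into three consecutive sub-intervals valued (at the target level) only by $P_x$, $P_y$ and $P_z$ respectively, while $D_t$ values its whole gadget and also the reserve at a single uniform density; the reserve is scaled to deliver the target to exactly $n$ dummies. Intuitively, each gadget will be used in one of two modes: an \emph{element mode}, in which its three sub-intervals are handed to $P_x,P_y,P_z$, or a \emph{dummy mode}, in which the whole gadget goes to $D_t$. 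The lengths and value levels are tuned, as in Bez{\'a}kov{\'a}--Dani, so that any allocation not arising from a perfect matching is forced to split some contested unit between two players, leaving one of them with at most half of its target.

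For the completeness direction I take a perfect matching $E'$ (so $|E'|=n$) and exhibit an explicit connected division: each $t\in E'$ is put in element mode so that $P_x,P_y,P_z$ each attain $B$; each of the $|E|-n$ triples $t\notin E'$ is put in dummy mode so that $D_t$ attains $B$ from its whole gadget; and the $n$ dummies of the matched triples are served by the reserve. Since every player's piece then lies inside a single block, this is a legal connected division that tiles $[0,1]$, and the number of pieces, $3n+(|E|-n)+n=3n+|E|$, equals the number of players; hence the egalitarian welfare is $B$.

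The crux --- and the step I expect to be the main obstacle --- is the soundness direction under the contiguity requirement. In the non-connected setting of Bez{\'a}kov{\'a}--Dani a player may collect its value from arbitrarily scattered items, so the counting argument that forces a perfect matching is clean. Here, by contrast, each player receives one \emph{interval}, which may straddle several consecutive gadgets and thereby harvest value from more than one of them, or shave a sliver off a neighbouring gadget; the permutation freedom further lets any player own any contiguous piece of the fixed cake order. I must rule out that such cross-gadget pieces help. The plan is to prove a \emph{locality lemma}: in any division with egalitarian welfare exceeding $B/2$, every piece worth more than $B/2$ to its owner is effectively confined to a single gadget (or to the reserve), so the division induces a well-defined per-gadget mode assignment. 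Establishing this forces me to engineer the layout carefully --- the order of the gadgets, the order of the sub-intervals within each gadget, and the placement of the reserve --- so that any interval crossing a gadget boundary either gains no extra value or else occupies so much of the cake that enough other players are starved below $B/2$. The argument is thus global, running through the tiling and a count of pieces against players, rather than a purely local per-piece estimate. Once locality is in hand, the induced mode assignment can be fed into the Bez{\'a}kov{\'a}--Dani counting to recover a perfect matching of the \DM\ instance, which closes the reduction.
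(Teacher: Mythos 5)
Your high-level plan coincides with the paper's: a gap reduction from \DM~in which a perfect matching yields egalitarian welfare $B$ while its absence caps it at $B/2$, realized by laying out one gadget per triple along the cake. However, there are two genuine gaps. First, you correctly identify the contiguity issue as ``the crux,'' but you leave the locality lemma as a plan (``the plan is to prove\dots'', ``establishing this forces me to engineer the layout carefully\dots'') without supplying any mechanism that actually enforces it. The paper's mechanism is concrete: it inserts \emph{separation players}, each of whom desires a single tiny interval of value $1$ placed in the right half of every section. Any piece of another player that swallows such an interval drives the egalitarian welfare to $0$, and the filler intervals interleaved between the barriers are each worth less than $B/2$ to every non-separation player; hence every useful piece is trapped inside the left half of a single section. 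Without an analogous device, your element players can straddle gadget boundaries or shave slivers, and nothing in your construction rules this out.

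Second, your specific gadget does not deliver the factor-$2$ soundness gap even granting locality. With one element player per element, one dummy per triple, and a shared reserve worth $n$ targets to every dummy, up to $2n-1$ dummies can each extract strictly more than $B/2$ from the reserve; the remaining dummies taking more than half of their own gadgets still leave up to $3(2n-1)\ge 3n$ sub-intervals more than half available to element players, so the counting never forces anyone below $B/2$. Moreover the dummy-versus-element contention inside a gadget costs the dummy only a third of its gadget, not a half. The paper sidesteps both problems by using the Lenstra--Shmoys--Tardos multiplicity trick: for each ground element $w$ it creates $m_w-1$ \emph{identical} players competing for $m_w$ desired intervals, so that in a no-instance two identical players must share one interval of value $\tfrac{1}{|E|}$ and one of them is cleanly held to at most $\tfrac{1}{2|E|}$. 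You would need to replace your dummy-plus-reserve accounting with a sharing argument of this kind (or retune it substantially) for the $(2-\epsilon)$ inapproximability to follow.
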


\begin{proof}
We show a polynomial-time reduction from \DM~to \egal. Let $X,Y,Z$ and $E\subseteq X\times Y\times Z$ be an input to \DM. We construct a set of piecewise-constant valuations and a bound $B$ as an input for \egal; this instance can be transformed into an equivalent one with piecewise-uniform valuations.

For convenience, we take the cake to be the interval $\big[0,2|E|\big]$ rather than $[0,1]$. We will think of the cake as being sectioned into $|E|$ ``sections'' of length $2$, where the second unit of each section is used for separation from the next section.\footnote{Indeed, the last section needs not have such a unit; however, we leave it there in order to treat it identically to all the other sections.}
The set of players we create has players of three types: ``triplet players'', ``ground sets players'' and ``separation players''. In what follows we describe the valuation functions of all the players, by their type; for the bound, we set $B = \frac{1}{|E|}$.

\begin{itemize}
\item \textbf{Triplet Players:} We create a player for every $z\in Z$. For each $e_i\in E$ such that $z$ appears in the triplet $e_i$, the player created for $z$ has value of $\frac{1}{2|E|}$ for each of the intervals $\big(2(i-1), 2(i-1) + \frac{1}{4} \big)$ and $\big( 2(i-1) + \frac{3}{4}, 2(i-1) + 1 \big)$ in the left half of the $i$-th section.

Denote by $m_z$ the number of such triplets $e_i$ in $E$. To keep the value of the entire cake at $1$ for each player, we will divide the missing value $1 - \frac{m_z}{|E|}$ between the right halves of all sections. Specifically, player $z$ will additionally have value $\frac{|E|-m_z}{2|E|^2}$ for every interval $\big(2(j-1)+\frac{6}{5}, 2(j-1) + \frac{7}{5} \big)$ and $\big( 2(j-1) + \frac{8}{5}, 2(j-1) + \frac{9}{5} \big)$, for all $1\leq j\leq |E|$.

\item \textbf{Ground Sets Players:} For $x\in X$, let $m_x$ be the number of triplets in $E$ in which $x$ appears. We create $m_x-1$ identical players for $x$. For every $e_i\in |E|$ such that $x$ appears in $e_i$, all of $x$'s players will have valuation of $\frac{1}{|E|}$ for the interval $\big(2(i-1)+\frac{1}{4},2(i-1)+\frac{1}{2}\big)$ in the left half of the $i$-th section. Again, in order to complement these valuations to $1$, they will also assign a value of $\frac{|E|-m_x}{2|E|^2}$ for each of the intervals $\big(2(j-1)+\frac{6}{5}, 2(j-1) + \frac{7}{5} \big)$ and $\big( 2(j-1) + \frac{8}{5}, 2(j-1) + \frac{9}{5} \big)$, for all $1\leq j\leq |E|$.

We similarly create $m_y-1$ identical players for every $y\in Y$. For each $e_i\in E$ in which $y$ appears we have these players give value of $\frac{1}{|E|}$ to the interval $\big(2(i-1)+\frac{1}{2},2(i-1)+\frac{3}{4}\big)$, and complement this by giving value of $\frac{|E|-m_y}{2|E|^2}$ to each of the intervals $\big(2(j-1)+\frac{6}{5}, 2(j-1) + \frac{7}{5} \big)$ and $\big( 2(j-1) + \frac{8}{5}, 2(j-1) + \frac{9}{5} \big)$, for all $1\leq j\leq |E|$.

\item \textbf{Separation Players:} We finally create $3|E|$ separation players. For every segment $1\leq i\leq |E|$ we have a player $s_{3i-2}$ have valuation of $1$ for the interval $\big( 2(j-1)+1,3(j-1)+\frac{6}{5} \big)$, another player $s_{3i-1}$ have valuation $1$ for $\big( 2(i-1)+\frac{7}{5},2(i-1)+\frac{8}{5}\big)$, and a third player $s_{3i}$ have valuation $1$ for $\big( 2(i-1)+\frac{9}{5},2(j-1) + 2 \big)$.
\end{itemize}

Figure~\ref{fig:ceo-hard} illustrates the structure of the preferences in one segment. In this example, we consider some triplet $e_i = (x_j,y_k,z_\ell)\in E$, and show the section of the cake created for it, with the preferences of the players who desire some piece of it.

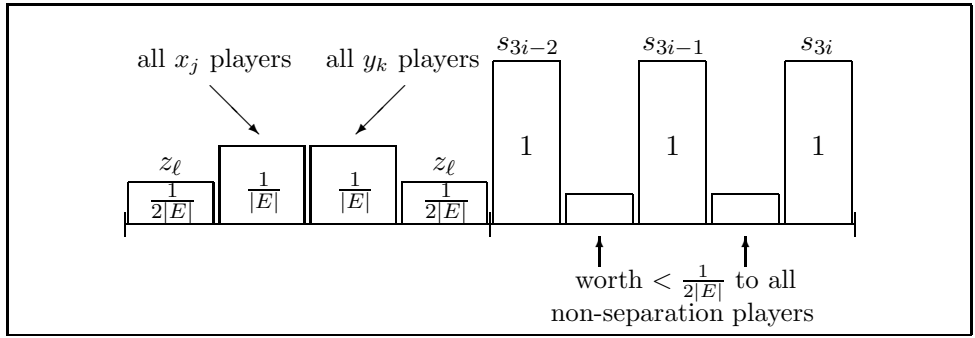
\begin{figure}[bth]
\begin{center}
\framebox[1.1\width]{

\setlength{\unitlength}{8mm}
\begin{picture}(14.1,5.2)

\put(0.9,1.7){\line(1,0){12}}
\put(0.9,1.5){\line(0,1){0.4}}
\put(6.9,1.5){\line(0,1){0.4}}
\put(12.9,1.5){\line(0,1){0.4}}

\multiput(0.95,1.7)(4.5,0){2}{\line(0,1){0.7}}
\multiput(2.35,1.7)(4.5,0){2}{\line(0,1){0.7}}
\multiput(0.95,2.4)(4.5,0){2}{\line(1,0){1.4}}
\multiput(1.28,1.97)(4.5,0){2}{$\frac{1}{2|E|}$} 
\put(1.45,2.6){$z_\ell$} \put(5.95,2.6){$z_\ell$}  

\multiput(2.45,1.7)(1.5,0){2}{\line(0,1){1.3}} 
\multiput(3.85,1.7)(1.5,0){2}{\line(0,1){1.3}} 
\multiput(2.45,3.0)(1.5,0){2}{\line(1,0){1.4}}
\multiput(2.88,2.15)(1.5,0){2}{$\frac{1}{|E|}$} 
\put(1.1,4.3){\small{all $x_j$ players}} 
\put(4.2,4.3){\small{all $y_k$ players}} 
\put(2.3,4){\vector(1,-1){0.8}}
\put(5.5,4){\vector(-1,-1){0.8}}

%\put(0,1.2){$\underbrace{\qquad\qquad\qquad\qquad\qquad\qquad\qquad\qquad\qquad\qquad\ }_{\text{\small{$\frac{1}{|E|}$ to player $z_\ell$}}}$}

\multiput(6.95,1.7)(2.4,0){3}{\line(0,1){2.7}} 
\multiput(8.05,1.7)(2.4,0){3}{\line(0,1){2.7}} 
\multiput(6.95,4.4)(2.4,0){3}{\line(1,0){1.1}} 
\multiput(7.38,2.85)(2.4,0){3}{$1$}
\put(7.0,4.6){$s_{3i-2}$} \put(9.4,4.6){$s_{3i-1}$} \put(12.0,4.6){$s_{3i}$}

\multiput(8.15,1.7)(2.4,0){2}{\line(0,1){0.5}} 
\multiput(9.25,1.7)(2.4,0){2}{\line(0,1){0.5}} 
\multiput(8.15,2.2)(2.4,0){2}{\line(1,0){1.1}} 
\multiput(8.7,1.0)(2.4,0){2}{\vector(0,1){0.5}} \put(8.3,0.65){\small{worth $<\frac{1}{2|E|}$ to all}}
\put(7.9,0.1){\small{non-separation players}}

\end{picture}}
\end{center}\caption{The valuations of the players for the section created for $e_i = (x_j,y_k,z_\ell)\in E$. Note that there are $m_{x_j}-1$ identical players for $x_j$ and $m_{y_k}-1$ identical players for $y_k$.}
\label{fig:ceo-hard}
\end{figure}

It is straightforward to observe that the construction above can be carried out in polynomial time. Also, all the numbers created in this instance can be represented by a number of bits logarithmic in the input size.

%Due to space constraints, we defer the correctness proof for this construction, as well as the adjustment for piecewise-uniform valuations, to the appendix.

Suppose that $(X,Y,Z,E)\in$\DM, and let %we can divide the cake as follows. Let 
$E'\subseteq E$ be a cover of $X\cup Y\cup Z$. For each $z\in Z$ there is a unique $e_i\in E'$ in which $z$ appears. Give the (unique) triplet player $z$ the left half of the $i$-th section $\big( 2(i-1), 2(i-1) + 1 \big)$. Next, consider the ground sets players. For every $x\in X$, there is a unique $e_i\in E'$ in which $x$ appears; hence, there are $m_x-1$ triplets $e_j \in E\setminus E'$ in which $x$ appears. Since no piece of the corresponding sections was given so far, we can give an interval of the form $\big( 2(j-1), 2(j-1) + \frac{1}{2}\big)$ to each of $x$'s players. Similarly, we have $m_y-1$ sections corresponding to $e_j\in E\setminus E'$ for $e_j$'s that contain $y$ in which the interval $\big( 2(j-1) + \frac{1}{2}, 2(j-1) + 1\big)$ is still available (and is worth $\frac{1}{|E|}$ to all of $y$'s players). This satisfies all the ground-sets players as well. Note that so far we have only given out the left half of each section; therefore, we can give each separation player her entire (single and unique) desired piece, dividing the parts between them arbitrarily. This gives each of them utility of $1$. We obtain that the division above indeed provides each player a piece of value at least $B = \frac{1}{|E|}$.

%Give the triplet player of that $z$ the interval $\big(3(i-1),3(i-1)+2\big)$. This guarantees all the triplet players a piece of value $b$. Next, consider the ground sets players. Let for every $x\in X$, there is a unique $e_i\in E'$ in which $x$ appears; thus, there are $m_x - 1$ triplets $e_j\in E\setminus E'$ in which $x$ appears. Since no piece of the segments of these $e_j$'s was given so far, we can give an interval of the form $\big(3(j-1),3(j-1)+1\big)$ to each of $x$'s players. Similarly, we have $m_y-1$ segments corresponding to $e_j\in E\setminus E'$ for $e_j$'s that contain $y$ from which the central part was not yet given. Thus, we can give an interval of the form $\big(3(j-1)+1,3(j-1)+2\big)$ to each of $y$'s players. This gives all the ground sets players a piece of value $b$ as well. Finally, note that that we have not given away the last unit of any interval so far, so all the separation players can have their entire desired piece (dividing the parts in between them arbitrarily) assuring each of them value of $1$. This is indeed an allocation in which all players get value of at least $B = \frac{1}{|E|}$, as required.

Conversely, suppose that $(X,Y,Z,E)\notin$\DM; we wish to prove that in this case, no division has egalitarian welfare $>\frac{1}{2|E|}$. Since $(X,Y,Z,E)\notin$\DM, it has to be that in every $E'\subseteq E$ of cardinality $n$ that covers all of $Z$, there is either some $x\in X$ or some $y\in Y$ that appears twice. Consider the egalitarian welfare maximizing division. Note that an interval containing a desired piece of a separation player could not be given to any other player, as this would starve that separation player completely, and the egalitarian welfare will be $0$. %Since the pieces desired by the separation players are of value $0$ to everyone else, we can therefore assume w.l.o.g.~that each separation player gets her entire desired piece. 
Hence, if any non-separation player gets a piece which intersects the right half of some section, this piece must be of value strictly smaller than $\frac{1}{2|E|}$ (or it would contain the entire desired piece of some separation player), and we are done. If this is not the case, then each non-separation player must get her piece from the left half of some section. Assume, towards contradiction, that each such player is given a piece of value at least $\frac{1}{2|E|} + \epsilon$ for some $\epsilon > 0$. This means that each $z$ player gets a piece that contains the interval $\big( 2(i-1) + \frac{1}{4}, 2(i-1) + \frac{3}{4} \big)$, and in particular consumes the desired pieces of all the ground players in this half-section. Let $E'$ be the set of $e_i$'s such that some triplet player has received her piece from the $i$-th section. This is a subset of $E$ of size $n$ that covers $Z$, and hence contains some $x\in X$ or $y\in Y$ twice; suppose w.l.o.g.~that it is $x$. We have $m_x - 1$ ground set players created for $x$, all interested in a total of $m_x$ intervals. However, at least two of these intervals are fully consumed by triplet players. Thus, at least two of $x$'s players must share a desired interval, and hence at least one of these players gets a piece of value at most $\frac{1}{2|E|}$; a contradiction. We conclude that if $(X,Y,Z,E)\notin$\DM, no partition can provide all the players with a piece of value more than $\frac{1}{2|E|}$.

Note that in the instance we created, every player has the same value density for all of her desired intervals intersecting with those of other players.
%every player has at most one desired interval which intersects with other players' desired intervals. 
Therefore, for each player $i$, we can rescale the physical size of her other desired intervals (i.e.~those that only player $i$ is interested in), so that $\nu_i$ has a unique value whenever it is not zero, while maintaining the total value of every desired interval. Note that this does not change anything (except the cut locations) in the proof above, and therefore the proof holds for piecewise-uniform valuations as well.

The proof for \egalD~is analogous, and can easily be obtained by a straightforward partitioning of the cake created in the reduction into discrete indivisible chunks. 
\end{proof}

%\subsection{Hardness of Maximizing Utilitarian Welfare}

%We use a reduction from \egalD~to prove the hardness of maximizing utilitarian welfare:

To prove the hardness of maximizing utilitarian welfare, we show a reduction from \egalD. However, it is convenient to show this reduction in two steps: we first define a new problem, to which \egalD~can be easily reduced, and then reduce this problem to \util~and \utilD. Our ``auxiliary problem'' is the following:

\problem{\mcspLong~(\mcsp)}
{An integer $m$, and sets $A_1,\ldots,A_n$ of segments of $[m]$.}
{Does there exist a set $S\subseteq \bigcup_{i\in[n]}{A_i}$ of $n$ \emph{disjoint} segments having $|S\cap A_i| = 1$ for all $i$?}

Given an instance of \egalD, we can define each $A_i$ as the set of consecutive subsequences of $A = [m]$ (the sequence of items) with value at least the threshold $B$ to $i$. This yields (in polynomial time) an instance of \mcsp~for which the answer is ``yes'' if and only if the original instance (for \egalD) was a ``yes'' instance. In this sense, \mcsp~can be thought of as a ``cleaner'' generalization of \egalD, which abstracts away the item values, and just divides all possible segments to ``acceptable'' and ``non-acceptable'' ones. 

By the observation above, we immediately get that \mcsp~is NP-complete. We now show that maximizing utilitarian welfare is hard using a reduction from \mcsp.

\begin{theorem}\label{thm:cuo-hard}
\util~and \utilD~are NP-complete in the strong sense.

This holds even if the valuation functions of the players are piecewise-uniform, and are given explicitly to the algorithm.
\end{theorem}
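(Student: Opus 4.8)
The plan is to reduce from \mcsp, which was just shown to be (strongly) NP-complete. I would construct a \utilD instance directly; the continuous \util version then follows by expanding each discrete item into a positive-length subinterval, so that connected divisions of the cake correspond to connected divisions of the item sequence and the attainable welfare is preserved. The guiding correspondence is to create one ``chooser'' player for each set $A_i$ and to design the valuations so that a chooser player collects her full reward (say one unit) essentially only when her allocated piece contains an entire acceptable segment of $A_i$. Because the $n$ pieces of a connected division are pairwise disjoint, a collection of reward-collecting pieces corresponds to a family of pairwise-disjoint segments, one per set --- that is, exactly to a solution of \mcsp. I would set the target bound $B$ to be the total welfare attained when every chooser player is satisfied, plus the fixed value contributed by the auxiliary gadgets described next.

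The delicate part of the construction is making acceptable segments behave \emph{atomically}. Since each $v_i$ is an additive measure, the value of a piece can only grow as the piece grows, so I must prevent a chooser from earning a partial reward out of a partial segment, and from ``over-earning'' by covering several of her (possibly overlapping or adjacent) acceptable segments at once. I would pad the layout with separator intervals between the segment positions and add \emph{separation players} owning them, as in the egalitarian reduction of Theorem~\ref{thm:ceo-hard}, arranging matters so that (i) any piece missing part of an acceptable segment earns strictly less, and (ii) a piece spanning two acceptable segments of the same set is forced to also swallow the intervening separators, thereby stealing value from separation players. The key difference from the egalitarian case is that starving a separation player here merely \emph{subtracts} its value from the utilitarian sum rather than collapsing the objective to $0$; I must therefore account for every such loss explicitly when fixing $B$, instead of relying on a min-based argument.

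For correctness I would argue both directions. If the \mcsp instance is a ``yes'' instance, I allocate each chooser her chosen disjoint acceptable segment and give the separation players their padding, and verify the welfare equals $B$. Conversely, if the welfare reaches $B$, the separator accounting forces the separation players to be (essentially) whole and forces each chooser to collect (essentially) one full reward; disjointness of the $n$ pieces then exhibits pairwise-disjoint acceptable segments, one per set, i.e.\ a valid \mcsp solution. I would keep all numerical values polynomially bounded in $m$ and $n$ so that strong NP-hardness is preserved, and conclude with the same rescaling trick used in the egalitarian proof --- rescaling the physical length of each player's privately-desired intervals --- to convert the piecewise-constant valuations into piecewise-uniform ones without altering any attainable utility.

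The main obstacle is precisely this converse direction: taming the additive objective so that it mimics the combinatorial packing constraint. A sum objective can profitably trade a small loss on one player against gains on others, so I cannot rely on a single starved player to ruin a division, as one can in the egalitarian setting; instead the gadget must make \emph{every} deviation from a clean packing --- partial segments, multi-segment grabs, and misplaced pieces --- strictly sub-threshold. Getting all of these inequalities simultaneously tight, so that welfare exactly $B$ is equivalent to the existence of a valid packing, is where the care lies, and I expect it to hinge on choosing the per-segment reward large relative to the separator values and on the padding rendering acceptable segments effectively indivisible.
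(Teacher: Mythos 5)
Your high-level plan (reduce from \mcsp, pad with separation players, keep numbers polynomially bounded, rescale to piecewise-uniform at the end) matches the paper's, and you correctly identify the central difficulty: an additive, monotone valuation cannot directly encode the predicate ``my piece contains a complete acceptable segment.'' But the gadget you propose --- one ``chooser'' player per set $A_i$ who values all segments of $A_i$ --- does not resolve that difficulty. The segments of a single $A_i$ may overlap or be nested (indeed, in the instances arising from \egalD, $A_i$ is upward-closed under containment), so you cannot place separator intervals between them; a single additive chooser then double-counts overlap regions, can earn more than one reward unit from a large piece (which the utilitarian sum happily trades against another chooser getting nothing), and earns almost a full reward from a piece covering almost all of one segment without any separation player being harmed. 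None of your proposed fixes (separators, tuning the reward against the separator values) controls value accumulated \emph{inside} the segments range, which is exactly where the packing constraint must bite, so the converse direction of your reduction does not go through as described.

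The paper's construction sidesteps this with a gadget your sketch is missing: it creates one segment player \emph{per segment} $s_i^j\in A_i$ (so $|A_i|$ players for set $A_i$, each with a single desired interval of value $\frac{1}{3}$ in the segments range), plus a ``compensation range'' for $A_i$ consisting of a chain of $|A_i|-1$ unit slots interleaved with separation players' intervals, where consecutive segment players compete for shared slots. By pigeonhole, at most $|A_i|-1$ of $A_i$'s segment players can be satisfied from the compensation range, so the per-set optimum of $\frac{4}{3}|A_i|-1$ is attainable iff at least one segment player takes her entire desired segment from the segments range; disjointness of the $n$ chosen segments then comes for free from the connectivity of the division. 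This ``multiple choice via one-too-few compensation slots'' idea is the heart of the proof; without it, or a substitute of comparable force, the reduction is incomplete.
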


\begin{proof}
We show a polynomial-time transformation from \mcsp~to \util; as in the egalitarian case, the proof can be easily modified to work for the discrete version \utilD~as well. Given an integer $m$ and sets $A_1,\ldots,A_n$ of segments of $[m]$, we create an instance of \util; again, we first create an instance with piecewise-uniform valuations.

For convenience, we take the cake to be the interval $\big[0,m + 2\cdot\sum_{i\in[n]}{|A_i|}-2n \big]$. We think of the cake as being composed of two parts: the part $[0,m]$ is the ``segments range'', and the remainder is the ``compensation range'', which itself will divided into different regions, one for each set of segments $A_i$. We will set the bound to $B = \frac{4}{3}\sum_{i\in[n]}{|A_i|} - n$. It remains to create the players and their preferences.

For each set $A_i$ of segments we create ``segment players'' $p_i^1,\ldots,p_i^{|A_i|}$ and ``separation players'' $q_i^1,\ldots,q_i^{|A_i|-1}$. %In order to simplify notation, we ``fix'' some $A_i$ and describe the valuations of its players only, so as to avoid adding the $i$ index everywhere. 
We denote $C_i = m + 2\cdot\sum_{i'<i}{(|A_{i'}|-1)}$; the players created for $A_i$ will only have non-zero values in the interval $[0,m]$ (the ``segments range'') and in the interval $[C_i,C_i + 2(|A_i| -1)]$ ($A_i$'s ``compensation range''). We now describe these preferences in detail.

\begin{itemize}
\item \textbf{$A_i$'s segment players:} Let $s_i^1,\ldots,s_i^{|A_i|}$ be the segments of $A_i$, and denote by $b(s_i^j)$ the beginning index of $s_i^j$, and by $e(s_i^j)$ its ending index. 

Every player $p_i^j$ for some $1\leq j \leq |A_i|$ has value $\frac{1}{3}$ for the interval $\big( b(s_i^j)-1,e(s_i^j) \big)$. In addition, each such player will have non-zero value for two more intervals in the ``compensation range'': Player $p_i^1$ will have value $\frac{1}{3}$ for each of the intervals $(C_i+1,C_i+2)$ and $(C_i+3,C_i+4)$. Player $j$, for $2\leq j\leq |A_i|-1$, will have value $\frac{1}{3}$ for each of the intervals $(C_i+2C_i-3,S+2j-2)$ and $(C_i+2j-1,C_i+2j)$. Finally, player $p_i^{|A_i|}$ will have value $\frac{1}{3}$ for each of the intervals $(C_i+2|A_i|-5,C_i+2|A_i|-4)$ and $(C_i+2|A_i|-3,C_i+2|A_i|-2)$.
These players will have zero value for the rest of the cake.

\item \textbf{$A_i$'s separation players:} Each of these players has non-zero value for only one interval in $A_i$'s compensation range, and no value for any other part of the cake. Specifically, for every $1\leq j\leq |A_i|-1$, player $q_i^j$ has value $1$ for the interval $(C_i+2j-2,C_i+2j-1)$.	
\end{itemize}

Figure~\ref{fig:cuo-hard} shows the structure of the players of one set $A_i$ (in this example, having $|A_i| = 7$) in the compensation range.

\begin{figure}[bth]
\begin{center}
\framebox[1.1\width]{

\setlength{\unitlength}{8mm}
\begin{picture}(14.1,5.2)

\put(0.9,1.7){\line(1,0){12}}
\put(0.9,1.5){\line(0,1){0.4}}
%\put(4.9,1.5){\line(0,1){0.4}}
%\put(9.9,1.5){\line(0,1){0.4}}
\put(12.9,1.5){\line(0,1){0.4}}

\multiput(0.95,1.7)(2,0){6}{\line(0,1){2.1}} 
\multiput(1.85,1.7)(2,0){6}{\line(0,1){2.1}} 
\multiput(0.95,3.8)(2,0){6}{\line(1,0){0.9}} 
\multiput(1.3,2.6)(2,0){6}{$1$} 
\put(1.25,4.0){$q_i^1$} \put(3.25,4.0){$q_i^2$} \put(5.25,4.0){$q_i^3$} \put(7.25,4.0){$q_i^4$} \put(9.25,4.0){$q_i^5$} \put(11.25,4.0){$q_i^6$}

\multiput(1.95,1.7)(2,0){6}{\line(0,1){0.7}} 
\multiput(2.85,1.7)(2,0){6}{\line(0,1){0.7}} 
\multiput(1.95,2.4)(2,0){6}{\line(1,0){0.9}} 
\multiput(2.03,1.9)(2,0){6}{$1/3$} 

\put(2.25,0.4){$p_i^1$} \put(3.25,0.4){$p_i^2$} \put(5.25,0.4){$p_i^3$} \put(7.25,0.4){$p_i^4$} \put(9.25,0.4){$p_i^5$} \put(11.25,0.4){$p_i^6$} \put(12.25,0.4){$p_i^7$} 

\put(2.4,0.95){\vector(0,1){0.5}} \put(2.6,0.95){\vector(3,1){1.5}}
\multiput(3.6,0.95)(2,0){5}{\vector(3,2){0.75}} \multiput(3.3,0.95)(2,0){5}{\vector(-3,2){0.75}} 
\put(12.3,0.95){\vector(-3,1){1.5}} \put(12.5,0.95){\vector(0,1){0.5}}

\end{picture}}
\end{center}\caption{The valuations of the players of one set $A_i$ in the compensation range. Each segment player $p_i^j$ has an additional piece of value $1/3$ in the segments part (not shown here).}
\label{fig:cuo-hard}
\end{figure}
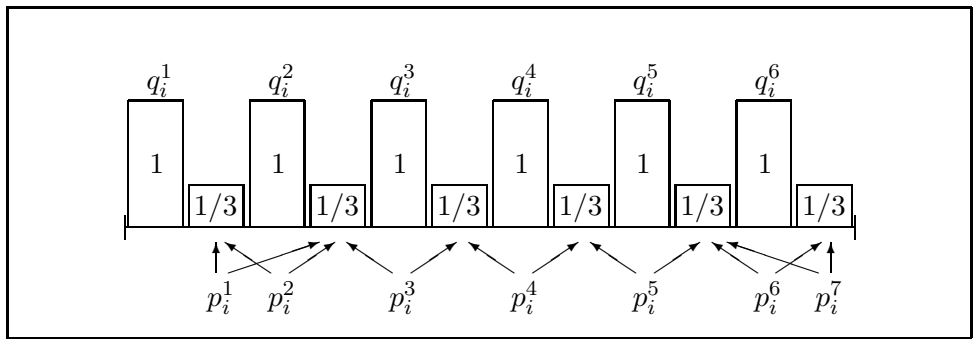

Clearly, the transformation can be computed in polynomial time; also note that the numbers created in this instance are all bounded by a polynomial of $m$ and $n$. 

It thus remains to show that a division of utilitarian welfare at least $B$ exists in such an instance if and only if we can choose a set of $n$ disjoint segments from $\bigcup_{i\in[n]}{A_i}$ such that exactly one segment is chosen from each $A_i$.

Suppose first that we can indeed choose $n$ such segments; we now describe a division of the cake. First, give each separation player her (single and unique) desired interval (in whole). This contributes $\sum_{i\in[n]}{|A_i|}-n$ to the welfare. Now, for every $A_i$, let $s_i^k$ be the segment chosen for the packing. We give player $p_i^k$ the interval $\big( b(s_i^k)-1,e(s_i^k) \big)$; this is possible, since all these segments are disjoint. To each player $p_i^j$ with $j<k$ we give the interval $(C_i + 2j-1,C_i + 2j)$, and to each player $p_i^j$ with $j>k$ we give the interval $(C_i + 2j - 3,C_i + 2j - 2)$. This gives every segment player utility of $\frac{1}{3}$, and contributes the missing amount of $\frac{1}{3}\sum_{i\in[n]}{|A_i|}$ to the utilitarian welfare.

Suppose now that there exists a division with utilitarian welfare of $B$. 
We complete the proof by showing that for every $A_i$ the maximum contribution of the players of $A_i$ to the utilitarian welfare is $\frac{4}{3}\cdot|A_i|-1$, and that this contribution is achievable if and only if at least one of the segment players of $A_i$ gets all her complete desired interval from the segments range.

Fix some $i$, and consider the players created for the set $A_i$. Note first that to maximize the sum of utilities of these players, no segment player can have utility exceeding $\frac{1}{3}$. This holds because any piece worth more to such a player will necessarily contain the entire desired piece of some separation player. If we give this piece to the segment player, the total utility of the two players is bounded by $1$; however, if we give the segment player a smaller piece and let the separation player have her entire desired piece we achieve total utility $>1$ without affecting any of the other players. Therefore, we get that the only way to provide total utility of at least  $\frac{4}{3}|A_i|-1$ to the players created for $A_i$ is to give each separation player her entire desired piece, and let each segment player have a piece she values exactly at $\frac{1}{3}$. However, it is easy to observe that to do that, at least one of the segment players must get her entire desired piece from the segments range.

As in the proof for \egal, we would now like to transform the valuations into piecewise-uniform ones. However, note that the property that each player has uniform value density on all the intervals desired also by other players does not hold in the current construction.
%However, note that the property that each player has only one desired interval that intersect the intervals of other players does not hold in the current construction. 
Namely, a segment player has a desired interval common with other players in the compensation range, and in addition another desired interval, which may have different value density in the segments range, and this interval may also intersect other players' desired intervals. 
%We note that the condition of Lemma~\ref{lem:uniform-const} does not immediately hold in our construction, since a segment player may have different density in her desired pieces in the segments range and the compensation range, while both may be also desired by other players. 
However, this can be remedied by splitting the desired intervals in the compensation range. Let $(a,b)$ be one of the intervals in the compensation range valued as worth $\frac{1}{3}$ by two players $p_i^j$ and $p_i^{j+1}$ (a similar idea can be used for the intervals desired by three players). We change the valuations of the players so that now player $p_i^j$ values each of the intervals $\big( a,\frac{a+b}{4} \big)$ and $\big(\frac{2(a+b)}{4}, \frac{3(a+b)}{4} \big)$ as worth $\frac{1}{6}$ and similarly $p_i^{j+1}$ values each of the intervals $\big( \frac{a+b}{4} , \frac{2(a+b)}{4} \big)$ and $\big( \frac{3(a+b)}{4} , b \big)$ as worth $\frac{1}{6}$. After this change, each player has at most one desired interval (in the segments range) that intersects other players' intervals, and we can %apply Lemma~\ref{lem:uniform-const}. 
transform the valuations into piecewise-uniform ones, as in the prveious proof.
This also maintains the property that the players created for $A_i$ can get total utility of $\frac{4}{3}|A_i| - 1$ iff at least one segment player gets her entire desired piece from the segments range.
\end{proof}

%While not being able to show an explicit proof for hardness of approximation like the one shown for egalitarian welfare, 
The strong NP-hardness of \util~and \utilD~implies the following corollary:

\begin{corollary}
There is no FPTAS for either \util~nor \utilD.
\end{corollary}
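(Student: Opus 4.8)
The plan is to invoke the standard principle (going back to Garey and Johnson~\cite{GJ79}) that a strongly NP-hard optimization problem whose optimal value is bounded by a polynomial in the input size cannot admit an FPTAS unless P=NP. The idea is that an FPTAS, run with a sufficiently small but still \emph{inverse-polynomial} choice of $\epsilon$, would be forced to return the exact optimum, and would thereby solve the strongly NP-hard problem in polynomial time. So rather than re-proving hardness from scratch, I would leverage Theorem~\ref{thm:cuo-hard} together with two quantitative properties of the reduction used to prove it.

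First I would pin down the magnitude and granularity of the objective on the instances produced in the proof of Theorem~\ref{thm:cuo-hard}. All player valuations there take values in $\{\frac{1}{3},1\}$, so after clearing denominators (multiplying every value by $3$) the utilitarian welfare of any division becomes a non-negative integer bounded by some polynomial $P$ in $n$ and $m$. Second, I would make explicit the additive gap already implicit in that proof: a ``yes'' instance of \mcsp~yields optimal (scaled) welfare $3B$, since every group $A_i$ simultaneously attains its maximum contribution $\frac{4}{3}|A_i|-1$; a ``no'' instance yields optimal welfare at most $3B-1$, because at least one group must fail to have any of its segment players take a full piece from the segments range, and the per-group analysis shows this costs that group exactly $\frac{1}{3}$ (i.e.\ $1$ after scaling).

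With these two facts in hand, the approximation argument is routine. I would instantiate the hypothesized FPTAS with $\epsilon = \frac{1}{2P}$. Its running time is polynomial in the input size and in $1/\epsilon = 2P$, hence polynomial overall. On any instance it returns a feasible value $\mathrm{ALG}$ with $\mathrm{OPT}/(1+\epsilon) \le \mathrm{ALG} \le \mathrm{OPT}$; since $\mathrm{OPT}/(1+\epsilon) > \mathrm{OPT}(1-\epsilon) \ge \mathrm{OPT} - P\epsilon = \mathrm{OPT} - \tfrac12 > \mathrm{OPT}-1$ and both $\mathrm{ALG}$ and $\mathrm{OPT}$ are integers, this forces $\mathrm{ALG} = \mathrm{OPT}$. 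Thus the FPTAS computes the exact scaled optimum, and testing whether it equals $3B$ decides \mcsp~(equivalently \egalD) in polynomial time, contradicting the strong NP-hardness from Theorem~\ref{thm:cuo-hard}. The same argument applies verbatim to \utilD, and to \util~through its explicit piecewise-uniform instances.

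I expect the only real work to lie not in the approximation calculation, which is standard, but in verifying the two structural facts about the reduction: that the scaled optimum is a polynomially bounded integer, and that a definite additive gap of at least $1$ (after scaling) separates ``yes'' from ``no'' instances. Both follow directly from the per-group accounting in the proof of Theorem~\ref{thm:cuo-hard} — where each $A_i$ contributes exactly $\frac{4}{3}|A_i|-1$ when packable and strictly less (by at least $\frac{1}{3}$) otherwise — so I would simply isolate this gap and record the scaling, then feed it into the Garey--Johnson template above.
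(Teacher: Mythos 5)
Your overall strategy is exactly the one the paper (implicitly) relies on: the corollary is stated as a direct consequence of strong NP-hardness via the standard Garey--Johnson template, which is what you are unpacking. The problem is that both ``structural facts'' you defer to at the end --- the ones you yourself identify as ``the only real work'' --- are false as stated, and they are precisely where the template needs care. First, the scaled welfare is \emph{not} an integer. A segment player $p_i^j$'s desired interval in the segments range is $\big(b(s_i^j)-1, e(s_i^j)\big)$, an interval of length $\ell=e-b+1$ worth $\frac{1}{3}$ in total, i.e.\ with density $\frac{1}{3\ell}$. In \util{} a player can receive an arbitrary sub-piece of this, so achievable utilities form a continuum; even in \utilD{} (with unit items) they are multiples of $\frac{1}{3\ell}$ for varying $\ell$, and clearing denominators requires a factor like $3\,\mathrm{lcm}(1,\dots,m)$, which is exponential, not a polynomial $P$. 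Second, the additive gap between ``yes'' and ``no'' instances is not $\frac{1}{3}$. Take a ``no'' instance whose only obstruction is that the two segments one would like to pick for $A_i$ and $A_{i'}$ overlap in a single item of a length-$\ell$ segment: give one player her full interval and the other all but that one unit, and every other player exactly her intended piece. Every group except one attains $\frac{4}{3}|A_i|-1$, and the failing group loses only $\frac{1}{3\ell}\le\frac{1}{3m}$, so $\mathrm{OPT}= B-\frac{1}{3\ell}$. The paper's per-group analysis shows the maximum $\frac{4}{3}|A_i|-1$ is unattainable in that case, not that failing costs $\frac{1}{3}$; you have over-read it.

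The argument is repairable, but not by the integrality route. What is true, and what needs to be proved, is that the yes/no gap is still inverse-polynomial: in any division of a ``no'' instance, choosing one ``nearly full'' segments-range player per group would yield one segment per $A_i$, two of which must overlap in a region of length at least $1$ and of value at least $\frac{1}{3m}$ to each; since the two pieces are disjoint, one of the two players loses at least (roughly) $\frac{1}{6m}$, so $\mathrm{OPT}\le B-\Omega(\frac{1}{m})$. Combined with $\mathrm{OPT}\le B=O\big(\sum_i|A_i|\big)$, running a hypothetical FPTAS with $\epsilon=\Theta\big(\frac{1}{mB}\big)$ --- still inverse-polynomial --- would distinguish $\mathrm{OPT}\ge B$ from $\mathrm{OPT}\le B-\Omega(\frac{1}{m})$ and decide \mcsp{}. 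So the conclusion and the overall shape of your proof stand, but the step you dismissed as routine bookkeeping (``a definite additive gap of at least $1$ after scaling'') is the one that fails and must be replaced by an explicit inverse-polynomial gap argument.
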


\section{Welfare Maximization with Non-Connected pieces}

In this section we analyze the problem of welfare maximization when each player may get a \emph{collection} of intervals.
%In this section we analyze the problem of welfare maximization when each player may get any set of intervals as her piece (instead of only a single contiguous interval). 
We first show that if the valuation functions are piecewise-constant and are given explicitly to the algorithm, the problem can be easily solved in polynomial time using a linear program almost identical to the one used by Cohler et al.~\cite{CLPP11}. %The full proof appears in the appendix.

\begin{theorem}\label{thm:non-conn-lp}
Given a set of $n$ piecewise-constant valuation functions (i.e.~for each $i\in[n]$ the list of intervals in which the value density function attains different values, along with the value for each such interval), it is possible to find a division maximizing the utilitarian (resp.~egalitarian) welfare in polynomial time.
\end{theorem}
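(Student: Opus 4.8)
The plan is to reduce the problem to a linear program over a common refinement of the players' value density functions, in the spirit of Cohler et al.~\cite{CLPP11}. First I would collect all breakpoints appearing in any of the piecewise-constant densities $\nu_1,\ldots,\nu_n$ and sort them, obtaining a partition of $[0,1]$ into atomic intervals $I_1,\ldots,I_K$. Since each $\nu_i$ is a step function, $K$ is bounded by the total number of constant pieces across all players, hence polynomial in the input size. The key structural fact is that on each atomic interval $I_\ell$ every density $\nu_i$ is constant, so the value player $i$ derives from any measurable subset of $I_\ell$ is proportional to that subset's length; in particular, $i$'s value $v_i(I_\ell)$ for all of $I_\ell$ is precomputable.

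Next I would introduce variables $z_{i\ell}\in[0,1]$ for $i\in[n]$ and $1\le \ell\le K$, where $z_{i\ell}$ is the fraction of $I_\ell$ allotted to player $i$, subject to the packing constraints $\sum_{i\in[n]} z_{i\ell}\le 1$ for every $\ell$. Each player's utility is then the linear expression $u_i=\sum_{\ell=1}^{K} z_{i\ell}\,v_i(I_\ell)$. For the utilitarian objective I would maximize $\sum_{i\in[n]} u_i$ (which in fact degenerates to handing each $I_\ell$ to the player valuing it most); for the egalitarian objective I would add a scalar variable $t$, impose $u_i\ge t$ for all $i$, and maximize $t$, thereby linearizing the $\min$. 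Both programs have $O(nK)$ variables and $O(nK)$ constraints, so each is solvable in polynomial time.

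The heart of the argument, and the step that must be handled with care, is the exact equivalence between feasible divisions and feasible points of this linear program. Given any division in which player $i$ receives a (possibly non-connected) collection of intervals, the value $i$ derives from $I_\ell$ depends only on the total length of $I_\ell$ received, because $\nu_i$ is constant on $I_\ell$; setting $z_{i\ell}$ to that length divided by $|I_\ell|$ yields a feasible LP point with the identical utility vector. Conversely, any LP-feasible $z$ can be realized as an actual division by cutting each $I_\ell$ into consecutive subintervals of lengths $z_{i\ell}\,|I_\ell|$ and assigning them to the respective players. It is precisely the relaxation of the contiguity requirement that makes this realization always possible, and it guarantees that there is no gap between the LP optimum and the true welfare optimum. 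Consequently the optimal LP value equals the maximum achievable (utilitarian or egalitarian) welfare, and backtracking from the optimal solution $z$ produces the desired division in polynomial time.
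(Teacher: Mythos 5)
Your proposal is correct and follows essentially the same route as the paper: refine $[0,1]$ into atomic intervals on which every density is constant, give each interval to the highest-density player for the utilitarian case, and solve the natural fractional-assignment LP with a max--min variable $t$ for the egalitarian case. Your explicit verification that LP-feasible points correspond exactly to actual (non-connected) divisions is a welcome elaboration of a step the paper leaves implicit, but it is not a different argument.
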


\begin{proof}%[Proof of Theorem~\ref{thm:non-conn-lp}]
Given the set of valuation functions, it is useful to divide the interval $[0,1]$ into a set of subintervals on which the value density function $\nu_i$ of \emph{every} player $i$ is constant. This can be done by simply taking the union of all the boundary points of all the intervals in the description of each $\nu_i$, together with the points $0$ and $1$, and considering the set of intervals formed be every two consecutive points in this set. Denote this set of intervals by $\mathcal{J}$. Let $I\in\mathcal{J}$ and $i\in[n]$; since all $a\in I$ have the same value $\nu_i(a)$ we slightly abuse notation and refer to this value as $\nu_i(I)$.

Clearly, the division maximizing the utilitarian welfare gives each interval $I\in \mathcal{J}$ to the player $i$ with the maximum $\nu_i(I)$. Therefore, finding such a division is straightforward.

For finding a division maximizing the egalitarian welfare, we can solve the following linear program:
\begin{align}
 \text{maximize\ } t &&  & \notag \\
 \text{subject to} & & \sum_{I\in\mathcal{J}}{\nu_i(I)\cdot x_i^I} & \geq t && \forall i\in[n] \label{eq:pl-util} \\
  & & \sum_{i\in[n]}{x_i^I} & \leq 1  && \forall I\in\mathcal{J} \label{eq:interval} \\
 & & x_i^I &\geq 0 && \forall i\in[n], I\in\mathcal{J}  \notag
\end{align}
Here, $x_i^I$ indicates the portion of interval $I$ given to player $i$. The constraint~\eqref{eq:pl-util} makes sure that each player obtains utility of at least $t$ (which is the variable the program maximizes), and the constraint~\eqref{eq:interval} assures that we do not give away more than 100\% of any interval. Therefore, by solving the above program, we can find the egalitarian-optimal division in polynomial time.
\end{proof}

In contrast to this positive result, it turns out that maximizing welfare is \emph{impossible} if instead of receiving the explicit functions, we only get oracle access to the valuations. In particular, we show that no deterministic algorithm (even super-polynomial) can find a division approximating the utilitarian or egalitarian optimum by a factor smaller than $n$. Note that this bound is tight, as every \emph{proportional} division\footnote{Recall that a division is said to be proportional if it gives each player what she considers to be at least $1/n$ of the total value of the cake.} approximates utilitarian and egalitarian welfare by at least $n$, and many algorithms for finding proportional divisions do exist in the queries model (see, e.g.~\cite{RW98} for a survey).

%Finally, we note that there exist many algorithms for finding a \emph{proportional} division in the queries model (see~\cite{RW98} for a list of such algorithms). Recall that a division is proportional if it gives each player utility of at least $1/n$; therefore, every proportional division approximates the utilitarian and egalitarian welfare by at least $n$, and we get that our bound is indeed tight.

\begin{theorem}\label{thm:non-conn-lb}
For any $\epsilon > 0$, no deterministic algorithm working in the oracle input model can approximate utilitarian or egalitarian welfare to a factor of $n-\epsilon$, when non-connected pieces are allowed.
\end{theorem}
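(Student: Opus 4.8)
The plan is to use an adversary argument against the query oracle, exploiting the fact that a deterministic algorithm can extract only finitely many scalar facts about the valuations before it must commit to a division. First I would let the adversary answer \emph{every} query as if all $n$ players had the uniform valuation, i.e.\ reply $v_i(a,b)=b-a$ to a value query and $v_i^{-1}(a,x)=a+x$ to an inverse query. Since the algorithm is deterministic and must terminate in order to output, these answers determine a finite transcript and a final (non-connected) division $\{S_i\}_{i\in[n]}$, where each $S_i$ is a finite union of intervals. I would then collect into a finite set $P$ all endpoints occurring in the transcript together with all endpoints of the pieces $S_i$ (and $0,1$), and let the \emph{cells} be the intervals between consecutive points of $P$; by construction each $S_i$ is a union of whole cells.

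The heart of the argument is to exhibit a single piecewise-uniform profile that is consistent with this transcript yet has a near-perfect optimum. I would give every player $i$ density $n$ on a desired set $D_i$ and density $0$ elsewhere, where the $D_i$ are chosen to partition \emph{each} cell into $n$ pieces of equal length (so $D_i$ occupies exactly a $\tfrac1n$-fraction of every cell, has total measure $\tfrac1n$, and total value $1$). Because every queried interval is a union of whole cells, each such interval has $v_i$-value equal to its length, so the profile reproduces the uniform answers to all value queries (the inverse queries are handled below) and the algorithm behaves identically on it. For this profile the optimum is essentially perfect: assigning $D_i$ to player $i$ is a legal non-connected division giving every player value $1$, so the egalitarian optimum is $1$ and, since at each point only one player has positive density, the utilitarian optimum is $n$.

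Next I would bound what the algorithm achieves. Since $S_i$ is a union of whole cells and $D_i$ meets every cell in exactly a $\tfrac1n$-fraction, we get $v_i(S_i)=n\cdot|D_i\cap S_i|=\mu(S_i)$, the Lebesgue measure of $S_i$. As the pieces are disjoint, $\sum_i \mu(S_i)\le 1$, whence the algorithm's utilitarian welfare is at most $1$ while the optimum is $n$, and its egalitarian welfare is $\min_i\mu(S_i)\le \tfrac1n$ while the optimum is $1$. In both cases the ratio is at least $n$, which already exceeds $n-\epsilon$, establishing the theorem (and in fact a marginally stronger bound).

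The main obstacle is maintaining \emph{exact} consistency of the transcript under the leftmost convention for inverse (cut) queries: with a genuine teeth-and-gaps structure, the leftmost $b$ with $v_i(a,b)=x$ is the right endpoint of one of player $i$'s teeth rather than the uniform value $a+x$. I would handle this using the positional freedom left within cells: there are only finitely many cut-query answers, so I can place, for each such query to player $i$, a tooth of player $i$ ending exactly at the returned point, perturbing the returned points to distinct values when two players would otherwise demand the same endpoint, and compensating within the cell to keep each cell's fraction equal to $\tfrac1n$. Alternatively, under the weaker reading in which the oracle may return \emph{any} valid $b$, answering $a+x$ is already consistent and this step is unnecessary. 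Either way the evaluation above is unaffected, since it depends only on the cell-level fractions and not on the exact positions of the teeth.
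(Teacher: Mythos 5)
Your proposal is correct and follows essentially the same adversary argument as the paper: answer all queries as if every valuation were uniform, collect the transcript points into cells, and then exhibit an indistinguishable instance in which each cell is split into disjoint slivers, one per player, so that the algorithm's division retains welfare at most $1$ while the optimum is close to $n$. Your version is marginally cleaner --- you use $n$ disjoint slivers per cell rather than the paper's $n+1$ with a shared one, yielding ratio exactly $n$ instead of $(1-\epsilon')n$ --- and you are more explicit than the paper about why the inverse-query answers remain consistent.
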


\begin{proof}%[Proof sketch]
We discuss utilitarian welfare; the arguments for egalitarian welfare are analoguous.
Let $A$ be a deterministic cake division algorithm working in the oracle input model, and fix some $n\in\mathbb{N}$ and $\epsilon>0$. Consider the operation of the algorithm on the set of preferences in which all players value the entire cake uniformly. In this case, the utilitarian welfare obtained cannot exceed $1$. We will now show that for any $\epsilon'>0$ we can construct a different set of preferences on which the algorithm outputs the same division (with the same welfare), but for which there exists a division achieving utilitarian welfare of $(1-\epsilon')n$. The theorem will follow by choosing $\epsilon' = \epsilon/n$.

Let $0 = p_0 < p_1 < \ldots < p_{k-1} < p_k = 1$ be the set of (distinct) points that appear in the operation of the algorithm on the input above. I.e.~$\{p_i\}_{i=0}^k$ is the set of all points $a,b$ for which the algorithm makes a query $v_i(a,b)$ or receives an answer $b = v_i^{-1}(a,x)$, and all the points $c$ in which the algorithm makes cuts in its output division. We create a new instance in which the valuations in the interval between two each consecutive such points $(p_j,p_{j+1})$ are ``rearranged''. The value of this interval in the original instance, as well as in the new instance, is $\ell_j = p_{j+1} - p_j$. We divide this interval into $n+1$ ``slivers'': the $i$-th sliver ($1\leq i\leq n$) is worth $\ell_j - \frac{\epsilon'}{k}$ to player $i$, and zero to everyone else. The $n+1$-st sliver of the interval is worth $\frac{\epsilon'}{k}$ for all the players. Formally, for each player $i$ and each $0\leq j\leq k$, set 
\begin{align*}
 v_i'\bigg( p_j + \frac{i-1}{n+1}\cdot\ell_j\ ,\ p_j + \frac{i}{n+1}\cdot\ell_j \bigg) & = v_i(p_j,p_{j+1}) - \frac{\epsilon'}{k} \\
 v_i'\bigg( p_j + \frac{n}{n+1}\cdot\ell_j\ ,\ p_{j+1} \bigg) & = \frac{\epsilon'}{k}
\end{align*}
and have player $i$ give value of $0$ to any other piece.

We now have that for every $0\leq j\leq j'\leq k$ it holds that $v'_i(p_j,p_{j'}) = v_i(p_j,p_{j'})$; furthermore, for every $0\leq j\leq k$ and every $x\in\mathbb{R}$ such that $A$ makes a query $v_i^{-1}(p_j,x)$ when executed on the preferences $\vec{v}$ we have ${v'}_i^{-1}(p_j,x) = v_i^{-1}(p_j,x)$. Therefore, consider the operation of the algorithm $A$ on the set of valuations $\vec{v'}$. The first query of $A$ on this instance is clearly identical to its first query on the instance $\vec{v}$, since before any queries are asked $A$ cannot distinguish between the two instances. However, as we observed, the answer to $A$'s first query with the new instance $\vec{v'}$ to the answer with the original instance $\vec{v}$. Since $A$ is deterministic, this implies that the next query asked by $A$ on $\vec{v'}$ is identical to that asked on $\vec{v}$. Continuing in this manner, we get that the entire operation of $A$ is identical on both instances. 

In particular, this implies that the cut points in the division produced for $\vec{v'}$ are all from the set $\{p_j\}_{j=0}^k$. However, this means that in this division, every player has the same value as it has in the division produced for $\vec{v}$, and therefore the utilitarian welfare is $1$. However, any division giving each player all ``slivers'' only she gives positive value to yields utilitarian welfare $> (1-\epsilon')n$, and the theorem follows.
\end{proof}

\section{Open Problems}

In this work we have taken the first steps in studying the problem of maximizing welfare in cake cutting with connected pieces. 
Many interesting problems related to this problem remain open. First and foremost, we believe that it should be possible to obtain 
%a substantially better approximation to the maximum egalitarian welfare, where the gap between the hardness result and our algorithm guarantees is significant. 
a reasonable approximation for the problem of maximizing the egalitarian welfare. (We do have non-trivial algorithms that achieve linear-factor approximations, but we conjecture that better algorithms can be found.) 
We also conjecture that the approximation ratio for maximizing utilitarian welfare can be improved; it may also be interesting to see if other inapproximability results can be shown. %In addition, many interesting extensions of this work can be considered; e.g.
Other interesting extensions include:

\begin{itemize}
\item \emph{Strategic Behavior:} One implicit assumption in our work was that we have access to the (true) valuations of the players. In reality, the players may have incentive to lie about their valuations.
%In this work, we have assumed that we have oracle access to the players' valuation functions. In reality, one would have to query the players themselves, who may not always answer truthfully. 
%\cite{CLPP10} showed that it is possible to find a fair (proportional and envy-free) division of a cake for strategic players in polynomial time. Is it also possible to devise a mechanism that will guarantee some approximation to the optimal welfare even with strategic players?
Guo and Conitzer~\cite{GC10} and Han et al.~\cite{HSTZ11} have considered this problem for a somewhat different setting; the question of what can be achieved truthfully in our setting is still open.

\item \emph{2-Dimensional Cake:} The cake cutting literature has generally assumed a one-dimensional cake; indeed, for the purpose of maintaining fairness, which was its main focus, a 2-dimensional cake can be simply ``projected'' into one dimension, and divided fairly according to the projection. However, this may result in a significant loss of welfare. Therefore, maximizing welfare in allocation of 2-dimensional cakes may require completely different tools and techniques.
\end{itemize}

\paragraph{Acknowledgment.} We thank Zvi Gotthilf for many useful discussions.

\bibliographystyle{alpha}
\bibliography{CEO}

\newcommand{\etalchar}[1]{$^{#1}$}
\begin{thebibliography}{AAWY98}

\bibitem[AAWY98]{AAWY98}
Noga Alon, Yossi Azar, Gerhard~J. Woeginger, and Tal Yadid.
\newblock Approximation schemes for scheduling on parallel machines.
\newblock {\em Journal of Scheduling}, 1:55--66, 1998.

\bibitem[AD10]{AD10}
Yonatan Aumann and Yair Dombb.
\newblock The efficiency of fair division with connected pieces.
\newblock In {\em WINE}, pages 26--37, 2010.

\bibitem[AS07]{AS07}
Arash Asadpour and Amin Saberi.
\newblock An approximation algorithm for max-min fair allocation of indivisible
  goods.
\newblock In {\em STOC}, pages 114--121, 2007.

\bibitem[BCH{\etalchar{+}}12]{BCHTY12}
Xiaohui Bei, Ning Chen, Xia Hua, Biaoshuai Tao, and Endong Yang.
\newblock Optimal proportional cake cutting with connected pieces.
\newblock In {\em AAAI}, 2012.

\bibitem[BD05]{BD05}
Ivona Bez{\'a}kov{\'a} and Varsha Dani.
\newblock Allocating indivisible goods.
\newblock {\em SIGecom Exchanges}, 5(3):11--18, 2005.

\bibitem[BS06]{BS06}
Nikhil Bansal and Maxim Sviridenko.
\newblock The santa claus problem.
\newblock In {\em STOC}, pages 31--40, 2006.

\bibitem[BT95]{BT95}
Steven~J. Brams and Alan~D. Taylor.
\newblock An envy-free cake division protocol.
\newblock {\em The American Mathematical Monthly}, 102(1):9--18, 1995.

\bibitem[BT96]{BT96}
Steven~J. Brams and Alan~D. Taylor.
\newblock {\em Fair Division: From cake cutting to dispute resolution}.
\newblock Cambridge University Press, New York, NY, USA, 1996.

\bibitem[CCK09]{CCK09}
Deeparnab Chakrabarty, Julia Chuzhoy, and Sanjeev Khanna.
\newblock On allocating goods to maximize fairness.
\newblock In {\em FOCS}, pages 107--116, 2009.

\bibitem[CKKK09]{CKKK09}
Ioannis Caragiannis, Christos Kaklamanis, Panagiotis Kanellopoulos, and Maria
  Kyropoulou.
\newblock The efficiency of fair division.
\newblock In {\em WINE}, pages 475--482, 2009.

\bibitem[CLPP10]{CLPP10}
Yiling Chen, John Lai, David~C. Parkes, and Ariel~D. Procaccia.
\newblock Truth, justice, and cake cutting.
\newblock In {\em AAAI}, 2010.

\bibitem[CLPP11]{CLPP11}
Yuga~J. Cohler, John Lai, David~C. Parkes, and Ariel Procaccia.
\newblock Optimal envy-free cake cutting.
\newblock In {\em AAAI-11: Proc. 25th AAAI Conference on Artificial
  Intelligence}, 2011.

\bibitem[DF99]{DF99}
R.~G. Downey and M.~R. Fellows.
\newblock {\em Parameterized Complexity}.
\newblock Springer, 1999.

\bibitem[DS61]{DS61}
L.~E. Dubins and E.~H. Spanier.
\newblock How to cut a cake fairly.
\newblock {\em The American Mathematical Monthly}, 68(1):1--17, Jan 1961.

\bibitem[EP84]{EP84}
S.~Even and A.~Paz.
\newblock A note on cake cutting.
\newblock {\em Discrete Applied Mathematics}, 7(3):285 -- 296, 1984.

\bibitem[Fei08]{Fei08}
Uriel Feige.
\newblock On allocations that maximize fairness.
\newblock In {\em SODA}, pages 287--293, 2008.

\bibitem[GC10]{GC10}
Mingyu Guo and Vincent Conitzer.
\newblock Strategy-proof allocation of multiple items between two agents
  without payments or priors.
\newblock In {\em AAMAS}, pages 881--888, 2010.

\bibitem[GJ79]{GJ79}
M.~R. Garey and D.~S. Johnson.
\newblock {\em Computers and intractability: A guide to the theory of
  NP-completeness}.
\newblock W. H. Freeman, 1979.

\bibitem[HSTZ11]{HSTZ11}
Li~Han, Chunzhi Su, Linpeng Tang, and Hongyang Zhang.
\newblock On strategy-proof allocation without payments or priors.
\newblock In {\em WINE}, pages 182--193, 2011.

\bibitem[LMMS04]{LMMS04}
Richard~J. Lipton, Evangelos Markakis, Elchanan Mossel, and Amin Saberi.
\newblock On approximately fair allocations of indivisible goods.
\newblock In {\em ACM Conference on Electronic Commerce}, pages 125--131, 2004.

\bibitem[LST90]{LST90}
Jan~Karel Lenstra, David~B. Shmoys, and {\'E}va Tardos.
\newblock Approximation algorithms for scheduling unrelated parallel machines.
\newblock {\em Math. Program.}, 46:259--271, 1990.

\bibitem[Pro09]{Pro09}
Ariel~D. Procaccia.
\newblock Thou shalt covet thy neighbor's cake.
\newblock In {\em IJCAI}, pages 239--244, 2009.

\bibitem[RW98]{RW98}
Jack Robertson and William Webb.
\newblock {\em Cake-cutting algorithms: Be fair if you can}.
\newblock A K Peters, Ltd., Natick, MA, USA, 1998.

\bibitem[Ste49]{Ste49}
H.~Steinhaus.
\newblock Sur la division pragmatique.
\newblock {\em Econometrica}, 17(Supplement: Report of the Washington
  Meeting):315--319, Jul 1949.

\bibitem[Str80]{Str80}
Walter Stromquist.
\newblock How to cut a cake fairly.
\newblock {\em The American Mathematical Monthly}, 87(8):640--644, 1980.

\bibitem[Str08]{Str08}
Walter Stromquist.
\newblock Envy-free cake divisions cannot be found by finite protocols.
\newblock {\em Electronic Journal of Combinatorics}, 15(1), Jan 2008.

\bibitem[SW03]{SW03}
Jiri Sgall and Gerhard~J. Woeginger.
\newblock A lower bound for cake cutting.
\newblock In {\em ESA}, pages 459--469, 2003.

\bibitem[Woe97]{Woe98}
Gerhard~J. Woeginger.
\newblock A polynomial-time approximation scheme for maximizing the minimum
  machine completion time.
\newblock {\em Operations Research Letters}, 20(4):149 -- 154, 1997.

\bibitem[Ziv11]{Ziv11}
Roie Zivan.
\newblock Can trust increase the efficiency of cake cutting algorithms?
\newblock In {\em AAMAS}, pages 1145--1146, 2011.

\end{thebibliography}

\end{document}